\def\BibTeX{{\rm B\kern-.05em{\sc i\kern-.025em b}\kern-.08em
    T\kern-.1667em\lower.7ex\hbox{E}\kern-.125emX}}
\newtheorem{proposition}{Proposition}
\newcommand{\Mcal}{\mathcal{M}}
\newlength{\swwidth}
\newcommand{\al}{\alpha}
\newcommand{\Valp}[1]{V(#1)}
\newcommand{\Valpn}[2]{V_{#1}(#2)}
\newcommand{\Jt}[2]{\Jtil_{#1}(#2)}
\newcommand{\Jvt}[1]{\Jtil(#1)}
\newcommand{\fal}[1]{f_{\alpha}(#1)}
\newcommand{\gs}{g}
\newcommand{\yzero}{Y_0}
\newcommand{\yzeros}{Y_0^*}
\newcommand{\jzero}{J_0}
\newcommand{\gzero}[1]{g_0(#1)}
\DeclareSymbolFont{matha}{OML}{txmi}{m}{it}
\DeclareMathSymbol{\varv}{\mathord}{matha}{118}
\newtheoremstyle{myremark}
  {\topsep}
  {\topsep}
  {\normalfont} 
  {}
  {\itshape} 
  {.} 
  {.5em} 
  {}
\theoremstyle{myremark}
\begin{document}
\title{Efficient and Timely Memory Access}
\author{%
 \IEEEauthorblockN{Vishakha Ramani, Ivan Seskar, Roy D. Yates}
 \IEEEauthorblockA{WINLAB,
                   Rutgers University\\
                   Email: \{vishakha, seskar, ryates\}@winlab.rutgers.edu}
}
\maketitle

\begin{abstract}
    This paper investigates the optimization of memory sampling in status updating systems, where source updates are published in shared memory, and reader process samples the memory for source updates by paying a sampling cost. We formulate a discrete-time decision problem to find a sampling policy that minimizes average cost comprising age at the client and the cost incurred due to sampling. 
    We establish that an optimal policy is a stationary and deterministic threshold-type policy, and subsequently derive optimal threshold and the corresponding optimal average cost.
\end{abstract}

\section{Introduction}
This work examines status updating systems in which sources generate time-stamped status updates of a process of interest, and these updates are stored/written in a memory system.
A reader fulfills clients' requests for these  updates by reading/sampling from the memory. The asynchronous nature of reader-writer interactions within memory systems introduces significant challenges. In particular, the readers' memory accesses should be optimized for timely processing of source updates as the reader becomes aware of fresher updates in the memory only when it chooses to query the memory. Furthermore, the memory access process must be regulated by a synchronization method between readers and writers to avoid race conditions. 


The primary question 
in this paper 
is when should the reader sample the memory.
Typically, there is a cost associated with memory sampling, and this cost structure varies between systems.
In systems with substantial object sizes, retrieving and locally copying objects incurs a high cost, while querying for timestamps remains relatively inexpensive. 
In contrast, there are systems where memory contains smaller objects, 
and the cost of retrieval is comparable to the cost of a timestamp query. 
These are systems where queries are sent to a distant database, with the cost being the latency associated with the query.

In this work, we focus on former class of systems where the reader knows the freshness of object in the memory by virtue of inexpensive timestamp retrievals. 
However, due to longer read times, denoted by high sampling costs, 
the Reader must decide if sampling is justified compared to age reduction obtained after sampling.



\subsection{Related Work}
Prior research on timely memory access has explored issues related to the impact of different synchronization primitives on the timely retrieval of stored data items. Particularly, \cite{ramani-cyINFOCOM, ramani-cy-AoI-INFOCOM} examined the impact of lock-based and lock-less synchronization primitives in the context of a timely packet forwarding application.

In \cite{ramani-sy-wiopt2023}, the authors addressed the timely processing of updates stored in memory from multiple sources.
They proposed a system where the reader samples memory as a renewal point process, showing that a lazy sampling policy was optimal. The rationale was that lazy sampling mitigated negative impacts of high variance in client processing times.
This work differs from \cite{ramani-sy-wiopt2023} as it no longer assumes renewal sampling and introduces a cost for sampling. 
Nevertheless, the primary objective remains consistent: to identify an optimal reading policy maximizing source update timeliness with the client.

We note that the concept of timely memory sampling, wherein the reader incurs a cost for sampling for age reduction, shares similarities with research focused on managing access for multiple users within a communication channel. Various studies in the AoI literature have explored Whittle's index-based transmission scheduling algorithms \cite{yupinwhittles, yupin-tomc2020, kadotascheduling, Maatouk-ToWC-Whittles, Sun-BK-Whittles, Kadota-SM-ToN2019, tripathi2019whittle, maatouk2020asymptotically, sun-whittles-infocom, Jiang-BK-ITC2018}, where the scheduling problem is decomposed into multiple independent subproblems. Within each subproblem, an additional cost $(C)$ is associated with updating the user. 

However, there is a conceptual difference in the cost associated with the decoupled problem and this study. In a Whittle index policy, the
minimum cost that makes 
both actions --- updating a user or idling --- equally desirable 
is used as a mechanism to choose one of the many users. 
In this work, we enforce an explicit cost of accessing the memory, and we study the trade-offs observed with age and memory access by varying this system parameter. However, it is not a mechanism designed to distinguish between users.

\subsection{Contributions and Paper Outline}
This paper investigates the relation between sampling costs and Age-of-Information \cite{Kaul-YG-infocom2012}. 
In section~\ref{sec:system-model}, we formulate our problem as a Markov Decision Process (MDP) with the goal of minimizing average cost comprising age at the client and the cost incurred due to sampling.
In section~\ref{sec:avg-cost-opt}, we establish that an optimal policy of the MDP is a stationary and deterministic threshold-type policy. We then derive optimal threshold and the optimal average cost by exploiting the structure of optimal policy. Finally, section~\ref{sec:num-evaluation} presents numerical evaluation on average cost against system parameters.

\section{System Model} \label{sec:system-model}
In this work, we focus on a  class of systems (see Fig.~\ref{fig:mem-samp-cartoon}) where a Writer writes the time-varying data received from the source into the memory, and a Reader samples the memory on behalf of a client. 
The client can  be either the same entity as the Reader, running as a single process, or a separate process. 
Additionally, while the system will have many sources, our focus will be on the memory that tracks  the status of a single process of interest. 
Even in this seemingly straightforward scenario, not previously explored in the AoI literature, optimizing AoI presents non-trivial challenges.

We consider a discrete-time slotted system with slots labelled $t=0,1,2, \ldots$. 
The modeling details of writing and reading processes are discussed below.

\subsection{Writing source updates to memory}
We assume the Writer commits/writes fresh (age zero) source updates to memory at the end of each slot with probability $p$,  independent from slot to slot. These source updates generate the age process $x(t)$ in the memory.

In practice, the write time will be non-negligible.
However, our focus in this work is not on systems where writing to the memory is the bottleneck process. Instead, our primary interest lies in examining the delays associated with reading and processing of source updates. Note that in the event that these writes do require time $\tau>0$, $x(t)$ and the update age process at the client will be shifted by $\tau$.

\begin{figure}[t]
    \centering
    \includegraphics[width=\linewidth]{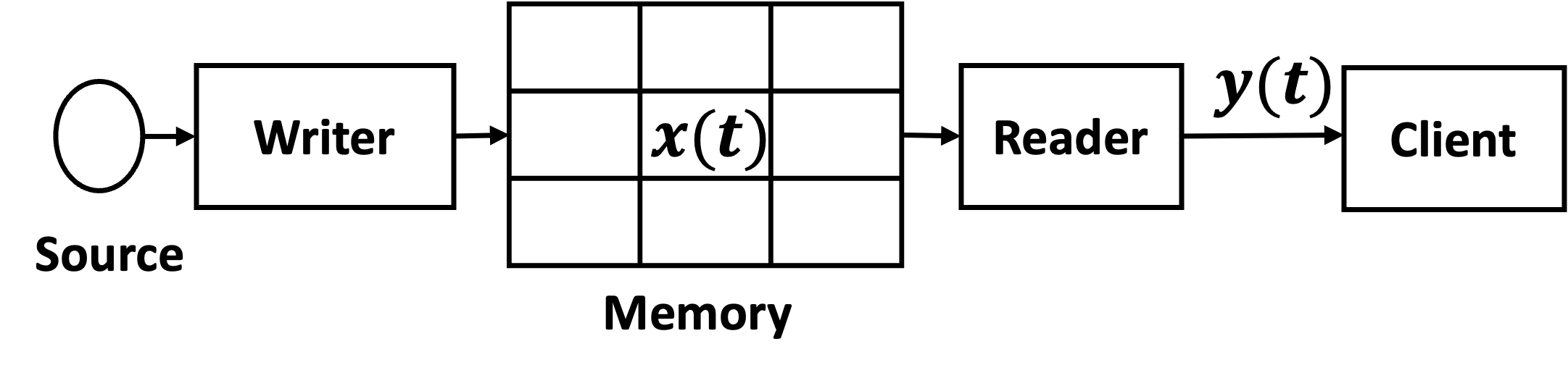}
    \caption{A writer updates memory based on the update received from source. A client requests the Reader process to read the source updates from the memory. The source update publication in the memory generates age process $x(t)$, and the update sampling by the Reader generates age process $y(t)$ at the client input.}
    \label{fig:mem-samp-cartoon}
\end{figure}

\subsection{Reading source updates from memory}
At each time slot, the Reader determines whether to access the memory and read a source update. The update in memory is read over a period of a slot, and the reader gets the data at the end of the slot. 
Notably, this model aligns with the Read-Copy-Update (RCU) \cite{mckenney1998read, MckenneyRCU2001} memory access paradigm, where a new update can be written in slot 
$t$
while the Reader is in the process of reading the current update in the same slot.

The Reader generates  an age process $y(t)$  at the input to the client that is a sampled version of source update age process $x(t)$ in the memory. Hence, we say the Reader {\em samples} the updates in the memory.

The state-dependent action $a(t)$ selected by the Reader at time slot $t$ determines whether the Reader remains idle ($a(t) = 0$) or performs a read operation ($a(t) = 1$). 
We consider a scenario where a non-negative fixed cost $c$ is associated with reading the memory during each time slot. Ideally, the Reader aims to minimize $y(t)$, which means it would prefer to read in every slot to stay close to the age process $x(t)$.
 However, this comes at the cost of paying the sampling cost $c$. If the Reader samples too frequently, it might end up with the same update, resulting in no age reduction but incurring a penalty for sampling. On the contrary, if it reads too infrequently, the age at the client input increases.

In this work, we assume that the Reader is notified when an update is published in the memory, enabling the Reader to know the update age in the memory. Based on the system state, 
the Reader implements a scheduling scheme 
that minimizes the average cost $\E{y(t) + ca(t)}$. To address this, we model our problem as a Markov Decision Process (MDP).

\subsection{Markov Decision Process Formulation}
In the context of our MDP model, denoted with $\Mcal$ from here on, 
the following four 
components 
make up the structure: 
\begin{itemize}
    \item States: We denote the set of possible system states by $S$ which does not vary with time. State $s(t) \in S$ is a tuple $(x(t), y(t))$, where at the start of a time slot, $x(t) \in \{0, 1,2,\ldots\}$ is the age of the update in the memory, and $y(t)\in \{1,2,3,\ldots\}$ is the age of sampled source updates at the client. 
    Notice that $S$ is a countably infinite set since age is unbounded. 
    \item Action: Let $a(t) \in A = \{0, 1\}$ denote the action taken in slot $t$ indicating Reader's decision, where $a(t) = 1$ if Reader decides to read and $a(t) = 0$ if idle.
    \item Transition Probabilities: Letting $\pbar=1-p$, when $a(t) = 1$, the transition probability from state $s=(x,y)$ to state $s' \in S$ is
\begin{subequations}\eqnlabel{transproba}
\begin{IEEEeqnarray}{rCl}
\prob{s' \mid s = (x,y), a = 1} 
&=& \begin{cases}
p &\hspace{-0.44em} s'= (0,x+1),\\ 
\pbar &\hspace{-0.44em} s'= (x+1,x+1). 
\end{cases}
\IEEEeqnarraynumspace       
\eqnlabel{transproba1}
\end{IEEEeqnarray}
And when $a(t) = 0$, the transition probability is
\begin{IEEEeqnarray}{rCl}
\prob{s' \mid s = (x,y), a = 0} 
&=& \begin{cases}
p &\hspace{-0.44em} s'= (0,y+1),\\ 
\pbar &\hspace{-0.44em} s'= (x+1,y+1).
\end{cases}
\IEEEeqnarraynumspace       
\eqnlabel{transproba0}
\end{IEEEeqnarray}
\end{subequations}
\item Cost: The cost $C(s(t);a(t))$ incurred in state $s(t)$ in time slot $t$ under action $a(t)$ is defined as $C(s(t) = (x,y); ~a(t) = a) \coloneqq y + ca$.
\end{itemize}
Let $\pi: S \rightarrow A$ denote a policy that for each state $s(t) \in S$ specifies an action $a(t) = \pi(s(t)) \in A$ at slot $t$.
The expected average cost under policy $\pi$ starting from a given initial state at $t=0$, $s(0) = (x,y)$, is defined as:
\begin{equation}
    g_\pi(x,y) = \limsup_{{T \to \infty}} \frac{1}{T}
    \mathbb{E}_\pi\left[\sum_{t=0}^{T-1}(y(t) + ca(t))\right]
\end{equation}
We say that policy $\pi^*$ is average-cost optimal if
$g_{\pi^*}(s) = \inf g_\pi(s)$ for every $s\in S$. 
We focus on the case where for some constant $\gs$, $g_{\pi^*}(s) = \gs$ for all $s \in S$. Thus, the problem is to obtain $\pi^*$ such that $g = g_{\pi^*}(s) = \inf g_\pi(s)$ for every $s \in S$.


Our cost minimization problem falls within the category of average cost minimization problems. Given that the age can grow unbounded, both the number of states and the cost in each stage are countably infinite. In such MDPs, the existence of an optimal policy, whether stationary or non-stationary, is not guaranteed \cite[Chap~5]{ross83sdp}. 
Notably, even the existence of an optimal stationary policy may not hold, while an optimal non-stationary policy might exist \cite{ross1970applied}.

Proving the existence of an optimal average cost stationary policy is not an immediate goal in this paper and we defer this discussion to later in Section~\ref{sec:opt-pol-exist}. There, we draw upon results from \cite{sennott88}, which provides conditions ensuring the existence of an expected average cost optimal stationary policy. We verify that these conditions hold for our problem. In the subsequent section, we derive results regarding the structure of the optimal policy under the assumption that the optimal policy exists and the relative cost Bellman's equation is valid.

\section{Characterization of cost optimality}
\label{sec:avg-cost-opt}
\subsection{Discounted Cost}
We begin by introducing the $\alpha$-discounted version of the problem. Recall that the state for MDP $\Mcal$ is a tuple $s = (x,y)$, and $a \in \{0,1\}$. Then using \eqnref{transproba},
the discounted cost Bellman's optimality equation for $\Mcal$ is given by
\begin{align}
    \Valp{x,y} &= \min\{ y + \al\left(p\Valp{0,y+1} + \pbar\Valp{x+1, y+1}\right), \nn
    & y+c + \al(p\Valp{0,x+1}+\pbar\Valp{x+1,x+1})\}.
    \eqnlabel{valfunc}
\end{align}
Here, the first term of $\min$ corresponds to the reader staying idle ($a=0$), and the second term corresponds to the reader sampling ($a=1$).
The action that is a minimizer of \eqnref{valfunc} is referred to as the {\em $\al$-optimal action} and the resulting policy $\pi^*_\al$ is referred to as the {\em $\al$-optimal policy}.

We define the value iteration $\Valpn{n}{s}$ by
$\Valpn{0}{s} = 0, \forall s \in S$, and, for any $n > 0$,
\begin{align}
&\Valpn{n+1}{x,y} = 
\scalebox{0.95}{$
\min\{ y + \al\left(p\Valpn{n}{0,y+1} + \pbar\Valpn{n}{x+1, y+1}\right), 
$}
\nn
    &\qquad y+c + \al(p\Valpn{n}{0,x+1}+\pbar\Valpn{n}{x+1,x+1})\},
    \eqnlabel{valiteropteqn}
\end{align}
For non-negative costs, it is evident that $\Valpn{n}{s} \leq \Valpn{n+1}{s}$. 
It then follows  from \cite[Theorem~4.2, Chapter III]{ross83sdp} that
\begin{equation}
    \lim_{n \to \infty} \Valpn{n}{s} = \Valp{s}, \qquad s \in S.
\end{equation}
We now state properties of the value function. 
\begin{proposition}(Monotonicity): \label{prop:mp-mono}
    The value function $\Valp{x,y}$ is non-decreasing in both $x$ and $y$.
\end{proposition}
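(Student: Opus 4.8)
The plan is to establish the property first for each finite-horizon value iterate $\Valpn{n}{x,y}$ by induction on $n$, and then pass to the limit. Because $\Valpn{n}{s}\to\Valp{s}$ pointwise (as recalled just above the statement) and a pointwise limit of functions that are non-decreasing in each argument is again non-decreasing, it suffices to show that $\Valpn{n}{x,y}$ is non-decreasing in $x$ and in $y$ for every $n\ge 0$. The base case is immediate since $\Valpn{0}{\cdot}\equiv 0$.

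For the inductive step I would assume that $\Valpn{n}{x,y}$ is non-decreasing in each coordinate and write the two expressions inside the $\min$ of \eqnref{valiteropteqn} as the idle term $Q_0(x,y) = y + \al\bigl(p\Valpn{n}{0,y+1} + \pbar\Valpn{n}{x+1,y+1}\bigr)$ and the sampling term $Q_1(x,y) = y + c + \al\bigl(p\Valpn{n}{0,x+1} + \pbar\Valpn{n}{x+1,x+1}\bigr)$. The first thing to record is the elementary fact that if $a\le a'$ and $b\le b'$ then $\min(a,b)\le\min(a',b')$; this lets monotonicity of both $Q_0$ and $Q_1$ transfer to their minimum $\Valpn{n+1}{x,y}$, so it is enough to check each term separately.

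For monotonicity in $x$, fix $y$ and take $x\le x'$. In $Q_0$ the variable $x$ enters only through $\Valpn{n}{x+1,y+1}$, which is non-decreasing in its first argument by hypothesis. In $Q_1$ the variable $x$ enters through $\Valpn{n}{0,x+1}$ (second coordinate) and through $\Valpn{n}{x+1,x+1}$ (both coordinates), so the induction hypothesis applied in each coordinate shows neither term decreases. Hence $Q_0(x,y)\le Q_0(x',y)$ and $Q_1(x,y)\le Q_1(x',y)$, giving $\Valpn{n+1}{x,y}\le\Valpn{n+1}{x',y}$. For monotonicity in $y$, fix $x$ and take $y\le y'$: in $Q_0$ the linear term $y$ increases and $\Valpn{n}{0,y+1}$, $\Valpn{n}{x+1,y+1}$ are non-decreasing in $y$ by hypothesis, while in $Q_1$ the value-iterate terms do not depend on $y$ at all, so $Q_1$ increases by exactly $y'-y\ge 0$. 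Both terms are again monotone, which completes the induction.

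The argument is essentially bookkeeping; the one step I would single out as the crux is the $x$-monotonicity of the sampling term $Q_1$, because $x$ appears there in \emph{both} coordinates of $\Valpn{n}{x+1,x+1}$, and this is precisely where the joint monotonicity of the inductive hypothesis (rather than monotonicity in a single coordinate) is required. Everything else — the preservation of the property under the pointwise minimum and under the pointwise limit — is routine.
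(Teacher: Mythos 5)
Your proof is correct and follows exactly the route the paper indicates: mathematical induction on the value iteration \eqnref{valiteropteqn} (checking each term of the $\min$ separately, with the $\min$ and the pointwise limit preserving monotonicity), which is the argument the paper declares straightforward and omits. No gaps; your identification of $V_n(x+1,x+1)$ in the sampling term as the one place where monotonicity in both coordinates is needed is the right observation.
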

\noindent The proof, which uses mathematical induction on \eqnref{valiteropteqn}, is straightforward and has been omitted. 
From here, proofs, if not shown, appear in the Appendix.
\begin{proposition} \label{prop:mp-thresh}
If the $\al$-optimal action is to sample in $(x,y)$, then the $\al$-optimal action is to sample in every $(x,y')$ with $y' \geq y$.
\end{proposition}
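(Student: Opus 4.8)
The plan is to reduce the threshold claim to a monotonicity statement about the two branches of the minimum in \eqnref{valfunc}, viewed as functions of $y$ for fixed $x$. First I would name the two candidate costs: let $Q_0(x,y) = y + \al(p\Valp{0,y+1} + \pbar\Valp{x+1,y+1})$ be the cost of idling and $Q_1(x,y) = y + c + \al(p\Valp{0,x+1} + \pbar\Valp{x+1,x+1})$ the cost of sampling, so that the $\al$-optimal action is to sample exactly when the gap $\Delta(x,y) \coloneqq Q_0(x,y) - Q_1(x,y)$ is non-negative. The goal then becomes to show that $\Delta(x,y)$ is non-decreasing in $y$, since this immediately propagates the inequality $\Delta(x,y)\ge 0$ to all $y' \ge y$.

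The crucial observation is that the immediate age cost $y$ enters both branches identically and hence cancels in $\Delta$, leaving
\begin{align}
\Delta(x,y) &= \al p\bigl(\Valp{0,y+1} - \Valp{0,x+1}\bigr) \nn
&\quad + \al\pbar\bigl(\Valp{x+1,y+1} - \Valp{x+1,x+1}\bigr) - c.
\end{align}
For fixed $x$, the only $y$-dependent terms are $\Valp{0,y+1}$ and $\Valp{x+1,y+1}$; the quantities $\Valp{0,x+1}$, $\Valp{x+1,x+1}$ and $c$ are constant in $y$. I would then invoke Proposition~\ref{prop:mp-mono}: because $\Valp{\cdot,\cdot}$ is non-decreasing in its second argument, both $y$-dependent terms are non-decreasing in $y$, and since $\al,p,\pbar \ge 0$ the whole gap $\Delta(x,y)$ is non-decreasing in $y$. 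This yields the proposition.

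The proof is essentially immediate once the right quantity is differenced, so there is no genuine analytic obstacle; the only point needing care is to verify that the $y$-dependence of the sampling branch is confined to the immediate cost and does not survive into the discounted continuation term. This is exactly what the transition structure \eqnref{transproba1} guarantees: after a read, the next client age is $x+1$, not $y+1$, so the continuation value of sampling is independent of $y$ and the cancellation above is clean. Recognizing this structural fact, rather than any estimation, is the substantive step.
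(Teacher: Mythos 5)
Your proof is correct and is essentially the same argument as the paper's: the paper also cancels the common immediate cost, writes the sampling condition as $p\bigl(\Valp{0,y+1}-\Valp{0,x+1}\bigr) + \pbar\bigl(\Valp{x+1,y+1}-\Valp{x+1,x+1}\bigr) \geq c/\al$ (in its $\Jvt{\cdot}$ difference notation), and invokes the monotonicity of $\Valp{\cdot,\cdot}$ in its second argument (Proposition~\ref{prop:mp-mono}) to propagate the inequality to larger $y$. Your observation that the continuation term of the sampling branch is independent of $y$ is exactly the structural fact the paper's proof exploits, so no further comparison is needed.
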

\noindent
Another version of this proposition asserts that if the $\al$-optimal action is to sample in state $(x,y)$ at stage $n$, then it is also optimal to sample in every $(x,y')$ with $y'\geq y$ at stage $n$. The proof employing the value iteration \eqnref{valiteropteqn} is omitted as it is similar to that of Proposition~\ref{prop:mp-thresh}.  

\begin{proposition}\label{prop:concavity}(Concavity):
For a fixed $x$, $\Valp{x,y+1} - \Valp{x,y}$ is non-increasing in $y$.
\end{proposition}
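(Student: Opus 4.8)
The plan is to prove the equivalent second-difference form of concavity, namely
\[
\Valp{x,y+2} + \Valp{x,y} \le 2\Valp{x,y+1},
\]
which is exactly the statement that the forward difference $\Valp{x,y+1}-\Valp{x,y}$ is non-increasing in $y$. I would establish this first for every value iterate $\Valpn{n}{x,\cdot}$ by induction on $n$, and then pass to the limit: since $\lim_{n\to\infty}\Valpn{n}{s}=\Valp{s}$ (stated in the excerpt) and the inequality above is closed (non-strict), it is preserved under the pointwise limit. The base case $n=0$ is immediate from $\Valpn{0}{\cdot}\equiv 0$.

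For the inductive step I would name the two branches of the $\min$ in \eqnref{valiteropteqn} as the idle and sample action-value functions,
\begin{align*}
Q^0_n(x,y) &= y + \al\bigl(p\Valpn{n}{0,y+1} + \pbar\Valpn{n}{x+1,y+1}\bigr), \\
Q^1_n(x,y) &= y + c + \al\bigl(p\Valpn{n}{0,x+1} + \pbar\Valpn{n}{x+1,x+1}\bigr),
\end{align*}
so that $\Valpn{n+1}{x,y}=\min\{Q^0_n(x,y),Q^1_n(x,y)\}$. The first observation is that, for fixed $x$, each branch is concave in $y$. Indeed $Q^1_n(x,\cdot)$ is affine, its only $y$-dependence being the additive term $y$; and $Q^0_n(x,\cdot)$ is a nonnegative combination of $y$ and of $\Valpn{n}{0,y+1}$ and $\Valpn{n}{x+1,y+1}$, each of which is concave in $y$ by the induction hypothesis applied at the fixed first coordinates $0$ and $x+1$. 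Hence $Q^0_n(x,\cdot)$ is concave as well.

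The main obstacle is that the minimum of two concave functions need not be concave, so concavity of $\Valpn{n+1}{x,\cdot}$ does not follow from concavity of the two branches. To close this gap I would invoke the per-stage threshold structure (the value-iteration version of Proposition~\ref{prop:mp-thresh} recorded in the remark after it, established by a separate monotonicity induction and hence usable at the stage that defines $\Valpn{n+1}{x,\cdot}$): for fixed $x$ there is a threshold $\tau$ such that idling is optimal for $y<\tau$ and sampling is optimal for $y\ge\tau$. Fixing three consecutive ages $y,y+1,y+2$, when all three select the same action the target inequality is precisely concavity of $Q^0_n$ or of $Q^1_n$; the only remaining configurations are the two that straddle $\tau$, namely $\tau=y+2$ and $\tau=y+1$.

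In the straddling case $\tau=y+2$ (idle at $y,y+1$, sample at $y+2$), optimality of sampling at $y+2$ gives $Q^1_n(x,y+2)\le Q^0_n(x,y+2)$, so
\begin{align*}
\Valpn{n+1}{x,y+2}&+\Valpn{n+1}{x,y} = Q^1_n(x,y+2)+Q^0_n(x,y) \\
&\le Q^0_n(x,y+2)+Q^0_n(x,y) \\
&\le 2Q^0_n(x,y+1) = 2\Valpn{n+1}{x,y+1},
\end{align*}
the last inequality being concavity of $Q^0_n$. In the case $\tau=y+1$ (idle at $y$, sample at $y+1,y+2$), optimality of idling at $y$ gives $Q^0_n(x,y)\le Q^1_n(x,y)$, so
\begin{align*}
\Valpn{n+1}{x,y+2}&+\Valpn{n+1}{x,y} = Q^1_n(x,y+2)+Q^0_n(x,y) \\
&\le Q^1_n(x,y+2)+Q^1_n(x,y) = 2Q^1_n(x,y+1) = 2\Valpn{n+1}{x,y+1},
\end{align*}
using that $Q^1_n$ is affine. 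This completes the induction, and the limit argument then yields the proposition. The delicate point to get right is the direction of the two boundary inequalities: in each straddling case one must replace exactly the \emph{mixed} branch by the action whose concavity is then applied, which is why the argument must lean on Proposition~\ref{prop:mp-thresh} rather than on any (false) convexity of the pointwise minimum.
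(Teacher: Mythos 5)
Your proof is correct and follows essentially the same route as the paper's: induction on the value-iteration stage, affinity of the sampling branch and (inductive) concavity of the idling branch, a case analysis resolved via the per-stage threshold structure of Proposition~\ref{prop:mp-thresh}, and passage to the pointwise limit. One caveat: your stated ``main obstacle'' is not real, since the pointwise minimum of two (discretely) concave functions \emph{is} concave --- case on which branch attains the minimum at $y+1$, bound the values at $y$ and $y+2$ above by that same branch, and apply its concavity --- and your straddle-case analysis is exactly this standard argument in disguise, so the appeal to Proposition~\ref{prop:mp-thresh} is harmless but unnecessary (the paper's own proof leans on it in the same inessential way).
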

The intuitive structure of the optimal policy is that with knowledge of the age in the memory, the Reader should refrain from sampling if the reduction in age doesn't justify the sampling cost. 
To further characterize this intuition, we introduce the following proposition. 
\begin{proposition} \label{prop:policy-hypo}
    If the $\al$-optimal action in state $(x,y)$ is to idle, then the $\al$-optimal action in states $(x+i,y+i), \forall i\geq 1$ is to stay idle.
\end{proposition}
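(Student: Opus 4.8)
The plan is to show that ``idle is $\al$-optimal'' propagates one step along the diagonal, from $(x,y)$ to $(x+1,y+1)$, and then to iterate this implication to reach every $(x+i,y+i)$. To make the one-step statement precise, I would cancel the common additive term in the two branches of \eqnref{valfunc} and define the sampling benefit
\begin{equation}
B(x,y) \;=\; \al p\bigl[\Valp{0,y+1}-\Valp{0,x+1}\bigr] + \al\pbar\bigl[\Valp{x+1,y+1}-\Valp{x+1,x+1}\bigr],
\end{equation}
so that, reading off \eqnref{valfunc}, idling is $\al$-optimal in $(x,y)$ exactly when $B(x,y)\le c$. With this notation the proposition reduces to the single monotonicity claim that $B$ is non-increasing along the diagonal, $B(x+1,y+1)\le B(x,y)$: given that claim, $B(x,y)\le c$ forces $B(x+1,y+1)\le c$, and induction on $i$ finishes the argument. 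I would dispose of the degenerate regime $x\ge y$ separately, where monotonicity (Proposition~\ref{prop:mp-mono}) gives $B(x,y)\le 0\le c$ directly and which is itself closed under diagonal shifts, so that only $x<y$ needs work.

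To prove $B(x+1,y+1)\le B(x,y)$ I would treat the two brackets separately. The diagonal increment of the ``$p$'' bracket is $\al p\{[\Valp{0,y+2}-\Valp{0,y+1}]-[\Valp{0,x+2}-\Valp{0,x+1}]\}$, which is $\le 0$ because $y\ge x$ and first differences in the second argument are non-increasing by Proposition~\ref{prop:concavity}. The ``$\pbar$'' bracket is the crux: after reindexing it reduces to showing
\begin{equation}
\Valp{x+2,y+2}-\Valp{x+2,x+2} \;\le\; \Valp{x+1,y+1}-\Valp{x+1,x+1}.
\end{equation}
Writing each side as a telescoping sum of second-argument first differences and comparing termwise, this in turn follows from two facts: concavity in $y$ (Proposition~\ref{prop:concavity}), and a submodularity (decreasing-differences) property of the value function, namely that $\Valp{a,b+1}-\Valp{a,b}$ is non-increasing in $a$.

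Since submodularity is not among the properties already established, the main work — and the main obstacle — is to prove it. I would do this by induction on the value iteration \eqnref{valiteropteqn}, carrying monotonicity, concavity, and submodularity as a joint hypothesis and showing the Bellman operator preserves all three. The delicate point is that the per-step update is a pointwise minimum of the idle cost $L_n(x,y)=y+\al(p\Valpn{n}{0,y+1}+\pbar\Valpn{n}{x+1,y+1})$ and the sample cost $R_n(x,y)=y+c+\al(p\Valpn{n}{0,x+1}+\pbar\Valpn{n}{x+1,x+1})$, and a minimum of two submodular (or concave) functions need not inherit the property in general. The way through is to exploit the threshold structure of Proposition~\ref{prop:mp-thresh}: for fixed $x$ the minimizer switches from idle to sample exactly once as $y$ increases, $R_n$ is affine in $y$ while $L_n$ is concave in $y$, and at the switch the left slope dominates the right slope, so the min can be analyzed piecewise and each property verified on each piece and across the kink. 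I expect the bookkeeping across the kink, together with the simultaneous handling of the first argument (where $x+1$ appears inside $V_n$), to be the most error-prone part, but no new idea beyond the joint induction is needed.
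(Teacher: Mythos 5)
Your reduction of the proposition to the diagonal monotonicity $B(x+1,y+1)\le B(x,y)$, and your handling of the ``$p$'' bracket via concavity, mirror the paper's own decomposition (its quantities $I_1$ and $I_2$ in the appendix proof). The gap is in the ``$\pbar$'' bracket: the submodularity lemma on which your whole plan rests --- that $\Valp{a,b+1}-\Valp{a,b}$ is non-increasing in $a$ --- is false for this MDP, and the joint value-iteration induction you propose cannot establish it because the Bellman operator does not preserve it. Concretely, starting from $\Valpn{0}{s}=0$, two steps of \eqnref{valiteropteqn} give
\begin{equation}
\Valpn{2}{x,y} \;=\; y + \min\bigl\{\al(y+1),\; c+\al(x+1)\bigr\},
\end{equation}
so that for any $0<c\le\al$,
\begin{equation}
\Valpn{2}{0,2}-\Valpn{2}{0,1} \;=\; 1,
\qquad
\Valpn{2}{1,2}-\Valpn{2}{1,1} \;=\; 1+c,
\end{equation}
i.e., the first difference in $y$ \emph{strictly increases} in $x$ at the feasible states $(0,1),(0,2),(1,1),(1,2)$, even though $\Valpn{1}{x,y}=y$ is modular. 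Your induction therefore dies at its first nontrivial step. The failure is structural, not bookkeeping: sampling resets the client age to roughly the memory age, so when $x$ is small the value function is nearly insensitive to $y$ (the $y$-increment will soon be erased by a sample), while when $x\approx y$ sampling is useless and $y$-increments are felt in full. That is \emph{increasing} differences (supermodularity) --- the opposite of what your termwise chain needs in its second inequality, and it arises precisely because the sample/idle switching threshold in $y$ moves with $x$, so the pointwise $\min$ destroys submodularity.

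What is missing is the paper's replacement for that step. The paper proves exactly your ``$\pbar$''-bracket claim, namely $\Jt{n}{\xhat+i,\yhat+i,\xhat+i}\le\Jt{n}{\xhat,\yhat,\xhat}$ (its $I_2(n+1)\le 0$), by a dedicated induction on the value-iteration stage that never leaves the idle region: under the standing hypothesis that idling is optimal in the relevant states (which by Proposition~\ref{prop:mp-thresh} extends from $(\xhat+i,\yhat+i)$ to $(\xhat,\yhat)$), both sides of the comparison satisfy the idle-branch recursion
\begin{equation}
\Jt{n}{u,v,w} \;=\; v-w + \al\bigl(p\,\Jt{n-1}{0,v+1,w+1} + \pbar\,\Jt{n-1}{u+1,v+1,w+1}\bigr),
\end{equation}
and the two terms on the right are then controlled by the concavity shift inequality \eqnref{jtilmono} and by the induction hypothesis, respectively. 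In other words, the needed comparison of $y$-differences across \emph{different} first arguments is not a state-space-wide property of $\Valp{\cdot,\cdot}$ (as submodularity would be); it holds only along diagonal trajectories on which the idle action is known to be taken, and proving it requires this action-dependent recursion. As written, your proposal has a genuine hole at its central lemma.
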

Specifically, when the memory is freshly updated, the Reader must assess whether sampling is worthwhile. If it opts against sampling initially, it should consistently abstain from sampling in subsequent slots until the memory undergoes another update, as the age reduction remains constant in the absence of changes. 
In terms of the MDP $\Mcal$, this concept translates to making a decision in the state $(0,y)$. If the optimal decision is not to sample in $(0,y)$, then the Reader should consistently refrain from sampling in states $(1, y+1)$, $(2, y+2)$, and so on.

\subsection{Average Cost Optimality}
Since the conditions of~\Thmref{opt-policy-exist} (in section~\ref{sec:opt-pol-exist}) 
hold, the average-cost optimal policy $\pi^*$ is the limit point of $\al$-optimal policies $\pi^*_\al$ with $\al \to 1$ \cite[Lemma]{sennott88}.
Therefore, Propositions \ref{prop:mp-thresh} and \ref{prop:policy-hypo} are sufficient to provide the structure of average cost optimal policy. Specifically, Propositions \ref{prop:mp-thresh} and \ref{prop:policy-hypo} imply that there exists a threshold $\yzero$ such that it is optimal to sample in $(0,y)$ for every $y \geq \yzero$ and idle otherwise. 

At this point, it is important to mention the set of feasible states under $\pi^*$. 
With $\yzero=1$, the optimal policy dictates sampling in every state $(0,y)$ with $y\geq 1$. Upon sampling in $(0,1)$, the system transitions to feasible states, specifically $\{(0,1), (1,1)\}$. In state $(1,1)$, a close examination of Bellman's equation \eqnref{valfunc} reveals that it is optimal to idle. Therefore, the set of possible states when choosing to idle in $(1,1)$ becomes $\{(0,2), (2,2)\}$.
Subsequent transitions follow a pattern where sampling in $(0,y)$ leads
to states $\{(0,1), (1,1)\}$, and choosing to idle in states $(i,i)$ with $i \in \mathbb{N}$ resulting in $\{(0,i+1), (i+1, i+1)\}$.

In scenarios where $\yzero>1$, optimality dictates idling in $(0,y)$ with $y < \yzero$, prompting the system to transition to states $\{(0,y+1), (1,y+1)\}$. The subsequent action in $(0,y+1)$ hinges upon whether $y+1 < \yzero$. If $y+1 \geq \yzero$, the system resets, transitioning to either $(0,1)$ or  $(1,1)$; conversely, if $y+1 < \yzero$, the system perpetuates a structure akin to that observed in state $(0,y)$. Conversely, if the system transitions to $(1,y+1)$, idling in $(1,y+1)$ is optimal. The resulting permissible states from this point include $\{(0,y+2), (2, y+2)\}$, and this pattern repeats.
We summarize this set of feasible states for the optimal policy in the following  proposition. 
\begin{proposition}
    For MDP $\Mcal$, under the optimal policy $\pi^*$ with threshold $\yzero$, the set of feasible states is
    \begin{equation}
S^* =\{(0, y) \mid y \in \mathbb{N}\} \cup \{(x, y) \mid x \geq 1 \text{ and } y - x < \yzero\}.
    \end{equation}
\end{proposition}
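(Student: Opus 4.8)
The plan is to establish the stated equality through two inclusions: first that $S^*$ is \emph{closed} under the dynamics induced by $\pi^*$ (so that, started inside $S^*$, the chain never leaves it and the feasible set is contained in $S^*$), and second that every state of $S^*$ is \emph{reachable} with positive probability (so that no feasible state is omitted). Throughout I would use that $\pi^*$ inherits the threshold structure of Propositions~\ref{prop:mp-thresh} and~\ref{prop:policy-hypo}, since $\pi^*$ is the limiting policy of the $\al$-optimal policies as $\al\to1$. A preliminary observation I would record is the invariant $y\geq x$: from the transition rules~\eqnref{transproba} one checks directly that both actions map $\{y\geq x\}$ into itself, since after sampling the client age is $x+1$ while the memory age is $0$ or $x+1$, and after idling the gap $y-x$ is either reset upward or preserved. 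Hence on $\{x\geq 1\}$ the gap $d:=y-x$ satisfies $0\leq d$, and $S^*$ restricts it to $d<\yzero$.

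The key step for closure is to pin down the optimal action at every state of $S^*$. At $x=0$ this is exactly the threshold: sample iff $y\geq\yzero$. For a state $(x,y)$ with $x\geq1$ and $d=y-x<\yzero$, I would argue that idling is optimal, splitting on $d$. If $d\geq 1$, the base state $(0,d)$ is admissible and, since $d<\yzero$, idling is optimal there; Proposition~\ref{prop:policy-hypo} then propagates idling to $(0+x,\,d+x)=(x,y)$. If $d=0$ (the diagonal $y=x$), the base state $(0,0)$ is not admissible, so instead I would invoke Bellman's equation~\eqnref{valfunc} directly: when $y=x$, the sampling and idling branches produce the identical successor pair $\{(0,x+1),(x+1,x+1)\}$, so the two actions differ only by the additive cost $c\geq0$, making idling optimal.

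With the optimal action fixed everywhere on $S^*$, a short case check on~\eqnref{transproba} shows each successor lies in $S^*$: sampling at $(0,y)$ with $y\geq\yzero$ yields $(0,1)$ and $(1,1)$, both in $S^*$ as the latter has gap $0<\yzero$; idling at any other state of $S^*$ yields either an $x=0$ state or a state with the same gap $d<\yzero$. This establishes that the feasible set is contained in $S^*$. For the reverse inclusion I would exhibit explicit positive-probability paths, assuming $0<p<1$. Fixing a diagonal $d\in\{0,\dots,\yzero-1\}$, its starting state is reachable (namely $(1,1)$ for $d=0$, obtained from some $(0,y)$ with $y\geq\yzero$ by a sampling transition with no fresh write, and $(1,1+d)$ for $d\geq1$, obtained from $(0,d)$ by an idling transition with no fresh write); then repeatedly idling without a fresh write, each with probability $1-p>0$, walks along $(x,x+d)$ for all $x\geq1$, covering the $x\geq1$ part of $S^*$. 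Since the client age $x+d$ grows without bound along such a walk, a single fresh write (probability $p>0$) reaches $(0,y)$ for arbitrarily large $y$, and the remaining values of $y$ follow similarly, so $\{(0,y):y\in\mathbb{N}\}$ is covered as well. Combining the two inclusions yields $S^*$ as stated.

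The main obstacle I anticipate is the treatment of the $d=0$ diagonal: Proposition~\ref{prop:policy-hypo} cannot be seeded from the inadmissible state $(0,0)$, so the optimality of idling on $\{(x,x):x\geq1\}$ must instead be read off the structure of~\eqnref{valfunc} itself via the coincidence of successor states noted above. Getting this edge case right, together with the standing assumption $0<p<1$ required for the reachability direction, is where care is needed; the remaining case analysis is routine.
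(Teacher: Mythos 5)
Your proposal is correct and takes essentially the same approach as the paper: the paper's own justification likewise pins down the optimal action on the candidate set (sample at $(0,y)$ iff $y \geq \yzero$, idle along diagonals with positive gap via Proposition~\ref{prop:policy-hypo}, and idle on the diagonal $y=x$ by noting that the two branches of Bellman's equation share the same successor states and differ only by the additive cost $c$) and then traces the induced transitions, which is exactly your closure-plus-reachability argument in informal form. The only mismatch is harmless and lies in the paper's statement rather than in your argument: your reachability half covers precisely the states with $y \geq x$, whereas the displayed $S^*$ literally also contains states $(x,y)$ with $x \geq 1$ and $y < x$ that are never visited under $\pi^*$, so your recorded invariant $y \geq x$ is the right way to read the proposition.
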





To determine the optimal threshold for an optimal policy $\pi^*$, we employ the relative cost Bellman's equation 
\begin{align}
    \gs + f(x,y) &= \min\{ y + pf(0,y+1) + \pbar f(x+1,y+1), \nn
    & y+c + p f(0,x+1)+ \pbar f(x+1,x+1)\}.
    \eqnlabel{rc-bellmans}
\end{align}
Here, $\gs$ denotes the optimal average cost, and $f(x,y)$ represents the relative cost-to-go function. Our objective is to identify relative cost-to-go function $f(x,y)$ for $(x,y) \in S^*$, facilitating the determination of the optimal threshold and, consequently, the optimal average cost. 

\begin{proposition} \label{prop:rel-cost-func}
Defining $(0,1)$ as the reference state with $f(0,1) = 0$, the relative cost functions satisify:
\begin{enumerate}[label={(\roman*)},itemindent=1em]
    \item $f(0,\yzero+1) - f(0,\yzero) = 1.$
        \label{prop:rc-diff-y0y0pl1}
        
    \item For any $x \geq 0$,
        \begin{equation}
            f(x, \yzero-1) = \frac{1}{p}(\jzero + \frac{\pbar}{p}) - 1, \eqnlabel{rc-y0min1}
        \end{equation}
        where $\jzero = \yzero - g + pf(0,\yzero)$. 
        
    \item For every $y < \yzero$, 
        $f(0,y) = f(1,y) \ldots = f(y,y).$
        \label{prop:rc-leq-y0}
        
    
        
    \item  When $\yzero > 1$, 
        $f(0,\yzero) = \yzero-\gs + c.$
        \label{prop:rc-y0}

        
\end{enumerate}
\end{proposition}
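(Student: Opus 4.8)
The plan is to read the relative cost Bellman's equation separately on the two regions carved out by the threshold policy $\pi^*$: the sampling region, i.e.\ the states $(0,y)$ with $y\ge\yzero$, and the idling region, which under $\pi^*$ contains every feasible state with $x\ge1$ (all of which satisfy $y-x<\yzero$) together with the states $(0,y)$, $y<\yzero$. On each region the $\min$ disappears and the equation becomes linear, which is what permits an explicit solution. I would prove part~(iii) first, since the remaining three parts rest on it.

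For part~(iii), fix $y<\yzero$. Subtracting the idling equations at $(x,y)$ and $(x+1,y)$ cancels the immediate cost $y$ and the common reset term $p\,f(0,y+1)$, leaving
\[
 f(x,y)-f(x+1,y)=\pbar\bigl(f(x+1,y+1)-f(x+2,y+1)\bigr).
\]
Iterating gives $f(x,y)-f(x+1,y)=\pbar^{k}\bigl(f(x+k,y+k)-f(x+k+1,y+k)\bigr)$, so with $\pbar<1$ and relative costs growing at most polynomially the difference is $0$. The cleaner route, which also dispenses with the growth estimate, is a coupling: run two copies of the chain from $(x,y)$ and $(x+1,y)$ with shared write indicators. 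Since $y-x<\yzero$ is preserved, both copies idle and incur identical per-slot costs $y(t)$ until the first write, at which instant both land in the same state $(0,\cdot)$ and stay coupled forever; hence their expected costs-to-go to the reference state $(0,1)$ agree and $f(x,y)=f(x+1,y)$.

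Parts~(i) and~(iv) are then quick. For part~(i), both $(0,\yzero)$ and $(0,\yzero+1)$ sample, so their equations share the continuation $p\,f(0,1)+\pbar\,f(1,1)$ and differ only by $1$ in immediate cost, giving $f(0,\yzero+1)-f(0,\yzero)=1$. For part~(iv), the sampling equation at $(0,\yzero)$ is $\gs+f(0,\yzero)=\yzero+c+p\,f(0,1)+\pbar\,f(1,1)$; using $f(0,1)=0$ and part~(iii) at $y=1<\yzero$ (legitimate exactly because $\yzero>1$) to get $f(1,1)=0$ yields $f(0,\yzero)=\yzero-\gs+c$.

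The crux is part~(ii). By part~(iii), $f(x,\yzero-1)$ is independent of $x$, so I compute $f(0,\yzero-1)$; its idling equation, rewritten with $\jzero=\yzero-\gs+p\,f(0,\yzero)$, is $f(0,\yzero-1)=\jzero-1+\pbar\,f(1,\yzero)$. The obstacle is that $f(1,\yzero)$ lies on the sub-threshold diagonal $\{(k,\yzero+k-1):k\ge1\}$ of mutually coupled idling states. Setting $h_k=f(k,\yzero+k-1)$ and feeding in the iterated part-(i) identity $f(0,\yzero+k)=f(0,\yzero)+k$, the idling equation becomes the linear recursion
\[
 h_k=(\jzero-1)+(1+p)k+\pbar\,h_{k+1},\qquad k\ge1.
\]
I would take the affine particular solution $h_k=\alpha+\beta k$, which forces $\beta=(1+p)/p$ and fixes $\alpha$, discarding the homogeneous modes $C\pbar^{-k}$ on the grounds that $h_k$ can grow only polynomially in $k$. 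Substituting $h_1$ into $f(0,\yzero-1)=\jzero-1+\pbar\,h_1$ and simplifying with $p+\pbar=1$ collapses everything to $\tfrac1p\bigl(\jzero+\tfrac{\pbar}{p}\bigr)-1$. I expect the final algebraic simplification, and the justification that the exponentially growing homogeneous mode is inadmissible, to be the only delicate points; the rest is a direct reading of the Bellman's equation on the two regions.
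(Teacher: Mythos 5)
Your proposal is correct, but it organizes the argument in a genuinely different way from the paper, so a comparison is worthwhile. The paper proves the parts in the order (i)--(iv) and deduces (iii) \emph{from} (ii): it computes $f(x,\yzero-1)$ for general $x$ by repeatedly substituting the idling Bellman equation along the diagonal $(x+1,\yzero), (x+2,\yzero+1),\ldots$ (idle optimality there coming from Proposition~\ref{prop:policy-hypo}), using the iterated form of (i) to write $f(0,\yzero+k)=f(0,\yzero)+k$, and summing the resulting series, with the tail $\pbar^{n+1}f(x+n+1,\yzero+n)\to 0$ justified by the linear-growth bound $f\le M$ from \Thmref{opt-policy-exist} and Lemma~\ref{lemma:dtmcpolicy}; since the answer does not depend on $x$, part (iii) then follows by downward induction on $y$. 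You reverse this dependency: you prove (iii) first and on its own terms, either by telescoping $f(x,y)-f(x+1,y)=\pbar^{k}\bigl(f(x+k,y+k)-f(x+k+1,y+k)\bigr)$ or by coupling, and then use it to collapse (ii) to the single value $f(0,\yzero-1)$, obtained by solving $h_k=(\jzero-1)+(1+p)k+\pbar h_{k+1}$ with an affine ansatz. The computational core of your (ii) is identical to the paper's---your recursion is exactly what the paper unrolls, and discarding the homogeneous mode $C\pbar^{-k}$ on growth grounds is the same move as the paper's vanishing tail, both resting on $0\le f\le M$---so what your route buys is a self-contained, probabilistically transparent proof of (iii) (the coupling picture) and a shorter (ii), while the paper's route gets (iii) for free as a by-product of (ii). Two caveats on your version: the coupling argument, as stated, glosses the optional-stopping identification of $f$-differences with coupled expected-cost differences (which needs the same integrability/growth bound you invoke elsewhere; your telescoping variant is fully rigorous and needs nothing beyond what the paper already uses), and your (iii)-based reduction establishes (ii) only for $x\le \yzero-1$, i.e.\ $x\le y$, whereas the claim reads ``any $x\ge 0$''; for $x>y$ one must separately argue idle optimality via monotonicity (Proposition~\ref{prop:mp-mono}), though the paper is equally silent on this point and only the states with $x\le y$ are used downstream.
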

\noindent We use Proposition~\ref{prop:rel-cost-func} to derive the optimal threshold.




\begin{lemma} \label{lemma:gy0exp}
    As a function of the threshold $\yzero$, the average cost is
    \begin{equation}
        \gzero{\yzero} = \frac{1}{2}\left(\frac{1}{p} + \yzero + \frac{2cp + \pbar/p}{p\yzero+\pbar}\right).
        \eqnlabel{gy0}
    \end{equation}
\end{lemma}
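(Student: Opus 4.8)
The plan is to evaluate the relative cost Bellman equation \eqnref{rc-bellmans} along the idle states $(0,y)$ with $y<\yzero$ so as to obtain a closed form for $f(0,\yzero-1)$ in terms of $\gs$ and $\yzero$ alone, and then to match this against the independent expression for $f(0,\yzero-1)$ supplied by Proposition~\ref{prop:rel-cost-func}. Equating the two expressions yields a single \emph{linear} equation in $\gs$, which I solve to obtain \eqnref{gy0}.

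Concretely, I first fix $\yzero>1$ and write \eqnref{rc-bellmans} at an idle state $(0,y)$ with $1\le y\le \yzero-2$. Since idling is optimal there, the minimum in \eqnref{rc-bellmans} is attained by its first ($a=0$) branch, so $\gs+f(0,y)=y+pf(0,y+1)+\pbar f(1,y+1)$. Because $y+1<\yzero$, part (iii) of Proposition~\ref{prop:rel-cost-func} gives $f(1,y+1)=f(0,y+1)$, the two transition terms collapse, and the equation reduces to the first-order recursion
\[ f(0,y+1)-f(0,y)=\gs-y,\qquad 1\le y\le \yzero-2. \]
Telescoping from the reference value $f(0,1)=0$ then produces
\[ f(0,\yzero-1)=(\yzero-2)\,\gs-\tfrac12(\yzero-2)(\yzero-1). \]

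Second, I bring in the two remaining facts from Proposition~\ref{prop:rel-cost-func}: part (iv), $f(0,\yzero)=\yzero-\gs+c$, which fixes $\jzero=\yzero-\gs+pf(0,\yzero)$ purely in terms of $\gs,\yzero,c,p$; and \eqnref{rc-y0min1} evaluated at $x=0$, which gives a \emph{second} expression $f(0,\yzero-1)=\tfrac1p\!\left(\jzero+\tfrac{\pbar}{p}\right)-1$. Setting the two expressions for $f(0,\yzero-1)$ equal gives one linear equation in $\gs$; collecting the $\gs$-coefficients, the factor multiplying $\gs$ simplifies to $(p\yzero+\pbar)/p$, since $(\yzero-2)+\tfrac{1+p}{p}=\tfrac{p\yzero+\pbar}{p}$. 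This is exactly the denominator appearing in \eqnref{gy0}, and solving the linear equation then yields the claimed formula for $\gzero{\yzero}$.

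The final solve is routine algebra; the points that need care are (a) justifying which branch of the $\min$ is active at each state, so that the idle Bellman equation is used correctly, and the invocation of part (iii) to replace $f(1,y+1)$ by $f(0,y+1)$ in the recursion; and (b) the boundary case $\yzero=1$, where the recursion is vacuous and both part (iv) and \eqnref{rc-y0min1} become inapplicable (there $\yzero-1=0$ is not a valid age). For $\yzero=1$ the reader always samples in $(0,y)$, so I would instead verify \eqnref{gy0} directly from \eqnref{rc-bellmans} at $(0,1)$ together with the diagonal relations of Proposition~\ref{prop:rel-cost-func}, or simply note that the rational function \eqnref{gy0} agrees with the $\yzero=1$ value obtained by a direct renewal-cycle computation. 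I expect (b) to be the only genuine obstacle.
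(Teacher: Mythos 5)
Your proposal is correct for $\yzero\ge 2$ and is essentially the paper's own argument: the paper's Part~3 likewise computes $f(0,\yzero-1)$ twice --- once by telescoping the idle-state recursion $f(0,y+1)-f(0,y)=\gs-y$, obtained from \eqnref{rc-bellmans} and Proposition~\ref{prop:rel-cost-func}\ref{prop:rc-leq-y0}, back to the reference state, and once from \eqnref{rc-y0min1} combined with Proposition~\ref{prop:rel-cost-func}\ref{prop:rc-y0} --- and then solves the resulting linear equation in $\gs$; your coefficient $(p\yzero+\pbar)/p$ is exactly the paper's $\yzero-1+1/p$. One small difference in your favor: your telescoping is empty but still valid at $\yzero=2$, so you absorb that case into the general argument, whereas the paper derives it separately as its Part~2. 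The only thin spot is precisely the case $\yzero=1$ that you flag: there Proposition~\ref{prop:rel-cost-func}\ref{prop:rc-leq-y0} is vacuous and both \ref{prop:rc-y0} and \eqnref{rc-y0min1} are inapplicable, so your first fallback (Bellman at $(0,1)$ plus ``the diagonal relations of Proposition~\ref{prop:rel-cost-func}'') does not close the argument --- Proposition~\ref{prop:rel-cost-func} supplies no diagonal information when $\yzero=1$. What is actually needed, and what the paper's Part~1 carries out, is the diagonal Bellman recursion $f(i,i)=-\gs+i(1+p)+\pbar f(i+1,i+1)$ telescoped infinitely, with the tail term $\pbar^{n+1}f(n+2,n+2)$ eliminated via the bound $f\le M$ from \Thmref{opt-policy-exist} and Lemma~\ref{lemma:dtmcpolicy}, yielding $\gzero{1}=1/p+cp$, consistent with \eqnref{gy0}. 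Your renewal-cycle alternative would also work, but either route has to be executed; as written, $\yzero=1$ remains sketched rather than proved.
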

\begin{theorem} \thmlabel{opt-thresh}
    The optimal threshold $\yzeros$ associated with optimal policy $\pi^*$ for MDP $\Mcal$ is $\yzeros=\ceiling{Y'}$ where 
    \begin{equation}
        Y' = \sqrt{2c+\left({1}/{p} - {1}/{2}\right)^2 } - 
        ({1}/{p} - {1}/{2}). 
        \eqnlabel{opt-thresh-eqn}
    \end{equation}
\end{theorem}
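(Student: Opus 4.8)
The plan is to treat Lemma~\ref{lemma:gy0exp} as furnishing a closed-form objective $\gzero{\yzero}$ and to minimize it over the positive integers $\yzero \in \{1,2,\ldots\}$. Rather than differentiating the continuous extension of \eqnref{gy0}---whose stationary point does \emph{not} coincide with $Y'$---I would work directly with the forward difference $\gzero{\yzero+1}-\gzero{\yzero}$, since the integer minimizer of a unimodal sequence is pinned down exactly by the sign of this difference.

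First I would substitute \eqnref{gy0} and compute the difference. Writing $K = 2cp + \pbar/p$, only the last term of \eqnref{gy0} depends nonlinearly on $\yzero$, so
\[
2\left(\gzero{\yzero+1}-\gzero{\yzero}\right) = 1 + K\left(\frac{1}{p(\yzero+1)+\pbar} - \frac{1}{p\yzero+\pbar}\right).
\]
Using $p+\pbar = 1$ to rewrite the denominators as $p\yzero+1$ and $p\yzero+1-p$, this collapses to
\[
2\left(\gzero{\yzero+1}-\gzero{\yzero}\right) = 1 - \frac{Kp}{(p\yzero+1)(p\yzero+1-p)},
\]
which is nonnegative exactly when $(p\yzero+1)(p\yzero+1-p)\geq Kp$. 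Setting $u = p\yzero$ and using $Kp = 2cp^2 + \pbar = 2cp^2 + 1 - p$, the boundary equation reduces to the quadratic $u^2 + (2-p)u - 2cp^2 = 0$.

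Next I would solve this quadratic and identify its positive root with $pY'$. Dividing the positive root $u = \tfrac12\!\left(-(2-p)+\sqrt{(2-p)^2+8cp^2}\right)$ by $p$ and completing the square yields exactly $\yzero = \sqrt{2c+(1/p-1/2)^2}-(1/p-1/2) = Y'$, the quantity in \eqnref{opt-thresh-eqn}. This is the crucial observation: $Y'$ is the root of the \emph{forward-difference} equation, not of the continuous derivative. Because $(p\yzero+1)(p\yzero+1-p)$ is strictly increasing in $\yzero$ for $\yzero\geq 0$, the difference $\gzero{\yzero+1}-\gzero{\yzero}$ is strictly negative for $\yzero < Y'$ and strictly positive for $\yzero > Y'$. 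Hence $\gzero{\yzero}$ strictly decreases while $\yzero < Y'$ and strictly increases once $\yzero > Y'$, so the sequence is unimodal and its integer minimizer is the smallest integer at or above $Y'$, namely $\ceiling{Y'}$.

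The step I expect to be the main obstacle is justifying the ceiling rather than the floor and checking the domain. Concretely, for non-integer $Y'$ the difference at $\yzero = \lfloor Y'\rfloor$ is negative (giving $\gzero{\ceiling{Y'}} < \gzero{\lfloor Y'\rfloor}$) while the difference at $\yzero = \ceiling{Y'}$ is positive (giving $\gzero{\ceiling{Y'}} < \gzero{\ceiling{Y'}+1}$), and together these force $\ceiling{Y'}$ to be the unique minimizer; when $Y'$ is itself an integer the difference vanishes there, $Y'$ and $Y'+1$ tie, and $\ceiling{Y'}=Y'$ remains optimal. I would also verify the feasibility constraint $\yzero\geq 1$: since $1/p-1/2>0$ for $p\in(0,1]$ and $c>0$ we have $Y'>0$, so $\ceiling{Y'}\geq 1$ and the unconstrained integer minimizer is admissible, giving $\yzeros=\ceiling{Y'}$.
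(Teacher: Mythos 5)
Your proposal is correct and takes essentially the same route as the paper: both analyze the sign of the discrete difference of $\gzero{\yzero}$, reduce it to a quadratic whose unique positive root is $Y'$, and conclude the sequence decreases below $Y'$ and increases above it, so the integer minimizer is $\ceiling{Y'}$. Your quadratic in $u=p\yzero$ is exactly the paper's $Q(\yzero)=\yzero^2+(2/p-1)\yzero-2c$ after dividing by $p^2$, and your floor/ceiling tie-breaking discussion merely fills in details the paper leaves implicit.
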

\begin{proof}
    It follows from \eqnref{gy0} and some algebra that     \begin{equation}
        \gzero{Y_0} - \gzero{Y_0+1} =\frac{-p^2}{2}\bracket{\frac{Y_0^2+(2/p-1)Y_0-2c}{(pY_0+\pbar)(pY_0+1)}}
        \eqnlabel{g0-change}
    \end{equation}
    We define $Q(Y_0)\equiv Y_0^2 + Y_0\left({2}/{p} - 1\right) - 2c$ and we observe that $Y'$ in \eqnref{opt-thresh-eqn} is the only positive root of  $Q(y)$. Further $Q(Y_0)>0$ for $Y_0>Y'$. It then follows from \eqnref{g0-change} that  $g_0(\floor{Y'})\ge g_0(\ceiling{Y'})$ and that $g_0(\ceiling{Y'}), g_0(\ceiling{Y'}+1),\ldots$ is a non-decreasing sequence. 
\end{proof}
\Thmref{opt-thresh} provides an explicit expression for the optimal threshold $\yzeros$. However, evaluating the optimal average cost $g = \gzero{\yzeros}$ using \eqnref{gy0} doesn't directly show the relationship between system parameters $c$ and $p$.
The following lemma provides a close approximation to the optimal average cost and captures the impact of these key system parameters on the average cost.

\begin{lemma} \label{lemma:glowerbound}
    The optimal average cost satisfies
\begin{equation}
\gs \geq {1}/{2} + \sqrt{2c + 1/p^2 -1/p}.
\eqnlabel{glowerbound}
        \end{equation}
\end{lemma}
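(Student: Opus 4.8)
The plan is to bound $\gs$ below by relaxing the integrality of the optimal threshold. By Lemma~\ref{lemma:gy0exp} the average cost at threshold $\yzeros$ is $\gzero{\yzeros}$, given in closed form by \eqnref{gy0}, and by \Thmref{opt-thresh} the optimal threshold $\yzeros$ is a positive integer, so $\gs=\gzero{\yzeros}$. Since \eqnref{gy0} is a perfectly good function of a real argument, it suffices to lower-bound $\gzero{y}$ over all real $y$ with $py+\pbar>0$; the integer threshold $\yzeros$ can only do worse than this relaxed minimum, which I will show equals the claimed right-hand side.

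First I would strip the additive and fractional clutter out of \eqnref{gy0} via the substitution $u=py+\pbar$. Writing $y=(u-\pbar)/p$ and abbreviating $A=2cp+\pbar/p$ for the numerator of the fraction in \eqnref{gy0}, the bracketed expression becomes
\[
\frac1p+\frac{u-\pbar}{p}+\frac{A}{u}=1+\frac{u}{p}+\frac{A}{u},
\]
where the constant collapses because $1-\pbar=p$. Hence $\gzero{y}=\tfrac12\bigl(1+u/p+A/u\bigr)$ for $u>0$. Applying the AM--GM inequality to the two positive terms $u/p$ and $A/u$ gives $u/p+A/u\ge 2\sqrt{A/p}$, with equality at $u=\sqrt{Ap}$, so $\gzero{y}\ge\tfrac12+\sqrt{A/p}$ for every admissible $y$, and in particular for $y=\yzeros$.

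It then remains only to identify the constant: since $\pbar=1-p$, we have $A/p=2c+\pbar/p^2=2c+1/p^2-1/p$, which yields $\gs=\gzero{\yzeros}\ge\tfrac12+\sqrt{2c+1/p^2-1/p}$ and closes the proof. There is no serious obstacle here; the only step requiring insight is spotting the substitution $u=py+\pbar$, which both collapses the leading constant to $1$ and exposes the AM--GM structure. A direct attack by setting $\tfrac{d}{dy}\gzero{y}=0$ would locate the same minimizer but through heavier algebra, and one would additionally have to verify that the unconstrained minimizer lies in the feasible region $u>0$ (equivalently $y>0$) — a check the substitution renders transparent, since AM--GM holds for every $u>0$.
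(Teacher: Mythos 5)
Your proof is correct and takes essentially the same route as the paper: both arguments lower-bound $\gs$ by relaxing the integer threshold, $\gs=\min_{\yzero\in\mathbb{N}}\gzero{\yzero}\ge\min_{y>0}\gzero{y}$, and then evaluate the relaxed minimum of \eqnref{gy0} exactly. The only difference is mechanical --- the paper sets $d\gzero{y}/dy=0$ to obtain the minimizer $\tilde{Y}_0^*=-\pbar/p+\sqrt{2c+\pbar/p^2}$, whereas your substitution $u=py+\pbar$ followed by AM--GM reaches the same value and, as a side benefit, certifies it as a \emph{global} lower bound directly, without the convexity/unimodality check that the paper's first-order-condition argument implicitly relies on but leaves unstated.
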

\begin{proof}
Note that $g=\min_{\yzero\in\mathbb{N}} g_0(\yzero) \ge \min_{y\in\mathbb{R}^+} g_0(y)$. To minimize $g_0(y)$ over positive reals, we set  $dg_0(y)/dy=0$, yielding
$y=\tilde{Y}_{0}^* = -\pbar/p + \sqrt{2c +\pbar/p^2}$. This yields  $g\ge g_0(\tilde{Y}_0^*)$, which is the lower bound \eqnref{glowerbound}. 
\end{proof}

\begin{figure}[t]
    \centering
    \includegraphics{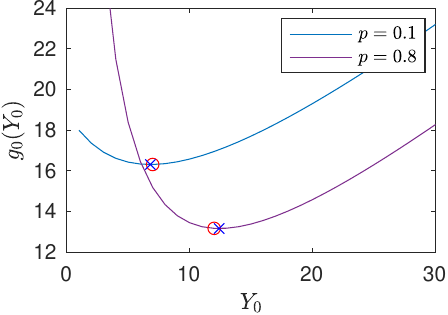}
    \caption{Plot of average cost $\gzero{\yzero}$ as a function of threshold $\yzero$ with sampling cost $c=80$. Here, \textcolor{magenta}{\large $\circ$} is the true optimal cost $\gzero{\yzeros}$, and  \textcolor{blue}{\large $\times$} is the approximate optimal average cost $\gzero{\tilde{Y}_0^*}$.}
    \label{fig:gy0plot}
\end{figure}

\section{Numerical Evaluation}
\label{sec:num-evaluation}
Fig.~\ref{fig:gy0plot} shows how the average cost $\gzero{\yzero}$, given by \eqnref{gy0}, changes with threshold $\yzero$. Initially, as $\yzero$ increases, the average cost decreases. This is because a low threshold leads to excessive sampling, incurring costs without much age reduction, resulting in a higher average cost. As $\yzero$ increases further, the cost of sampling approaches the gain in age reduction. However, setting $\yzero$ too high delays memory access, increasing client age and consequently the average cost. Fig.~\ref{fig:gy0plot} highlights the existence of an optimal threshold where the cost of sampling justifies the age reduction.

Fig.~\ref{fig:optimaly0} illustrates the optimal threshold $Y_0^*$
as a function of the source update probability $p$. 
We observe that the optimal threshold increases with $p$. When the Reader is required to make a decision in a given slot, it assesses both the age at the client and the age in the memory. These evaluations contribute to determining the potential age reduction vs the cost of sampling. 
In scenarios where the client's update is deemed sufficiently recent, the Reader may choose to skip sampling. This decision is influenced by a higher probability ($p$) of obtaining a more recent update soon, that will perhaps be worth sampling. 

Fig.~\ref{fig:optimalg} compares the optimal average cost $g$ with the lower bound from \eqnref{glowerbound}. 
The tightness of the lower bound is evident, as it closely aligns with the curve of the optimal average cost.
The plot also demonstrates that $g$ increases with sampling costs $c$.
This suggests that while designing the cost structure, the cost should be sufficiently high but not excessively so. 
Moreover, $g$ decreases as the update probability $p$ increases.
This is intuitive, as frequent memory updates increase the likelihood of the Reader receiving a fresh update when it samples, thereby reducing the age at the client.

\begin{figure}[t]
    \centering
    \includegraphics{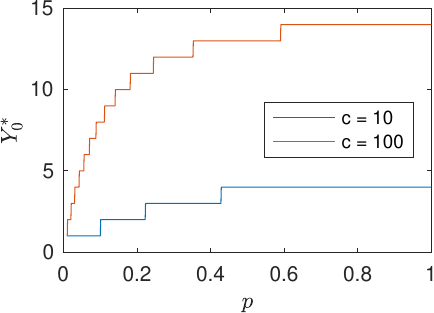}
    \caption{Plot of optimal threshold $\yzero^*$ as a function of probability $p$ of source update publication in a slot, with a fixed sampling cost $c$.}
    \label{fig:optimaly0}
\end{figure}

\section{Stationary average cost optimal policy}
\label{sec:opt-pol-exist}

In this section, we verify that the average cost optimality equation for MDP holds for $\Mcal$. To get started, we need the following result.
\begin{lemma}
\label{lemma:dtmcpolicy}
    Under the deterministic stationary policy of reading in every slot, the system exhibits an irreducible, ergodic Markov Chain, with expected cost $M(x,y)$ of first passage from state $s=(x,y)$ to $(0,1)$ satisfying
\begin{equation}     M(x,y)\le \frac{1+p}{p^2}(c+y) + \frac{3}{2p^3}. \eqnlabel{mxybound}
\end{equation}
\end{lemma}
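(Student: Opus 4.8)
The plan is to study the Markov chain induced by the ``read in every slot'' policy and then compute, and upper-bound, the expected first-passage cost to the reference state $(0,1)$ through a system of first-passage equations.

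First I would record the structure of the chain under $a\equiv 1$: by \eqnref{transproba1} the transition out of $(x,y)$ depends only on $x$, moving to $(0,x+1)$ with probability $p$ and to $(x+1,x+1)$ with probability $\pbar$. Iterating from any state, the chain enters and remains in the class $\mathcal{R}=\{(0,y):y\ge 1\}\cup\{(k,k):k\ge 1\}$, on which $x\le y$ always holds. Every state of $\mathcal{R}$ communicates with $(0,1)$ (a long no-reset run reaches any diagonal $(k,k)$, and a single reset reaches any $(0,k{+}1)$), so the chain is irreducible; the state $(0,1)$ has a self-loop of probability $p$, so it is aperiodic; and the mean return time to $(0,1)$ is finite (this follows from the first-passage computation below), so it is positive recurrent. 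Hence the chain is irreducible and ergodic.

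Next I would set up the first-passage cost. Let $M(x,y)$ be the expected accumulated per-slot cost $y_t+c$ until $(0,1)$ is first reached, with $M(0,1)=0$. Since from $(0,y)$ a reset lands directly on $(0,1)$, I get $M(0,y)=(y+c)+\pbar M(1,1)$ for $y\ge 2$, and for $x\ge 1$, $M(x,y)=(y+c)+pM(0,x+1)+\pbar M(x+1,x+1)$. Substituting the first identity into the second and writing $D_k:=M(k,k)$ collapses everything to the single backward recursion $D_k=\pbar D_{k+1}+b_k$ with $b_k=(1+p)(k+c)+p+p\pbar D_1$. Unrolling gives $D_k=\sum_{j\ge 0}\pbar^{\,j}b_{k+j}$; evaluating $\sum_j\pbar^{\,j}=1/p$ and $\sum_j j\,\pbar^{\,j}=\pbar/p^2$ yields a closed form, and setting $k=1$ solves for $D_1$. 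Back-substitution then shows that $M(x,y)$ is affine, with the coefficient of $c$ equal to exactly $(1+p)/p^2$ (the expected first-passage time from an $x\ge 1$ state, which also supplies the finite return time used above), the coefficient of $y$ equal to $1$, the coefficient of $x$ equal to $1/p$, and a remaining constant of order $1/p^3$. Finally, on $\mathcal{R}$ we have $x\le y$, and since $1+1/p\le (1+p)/p^2$ the $x$- and $y$-terms combine into at most $\tfrac{1+p}{p^2}\,y$; bounding the leftover constant by $\tfrac{3}{2p^3}$ gives \eqnref{mxybound}.

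The hard part will be that $y$ grows without bound along the diagonal runs $(k,k)\to(k+1,k+1)$, so the first-passage system is genuinely infinite and cannot be reduced to a finite linear solve. Handling this requires summing the geometric-weighted arithmetic series above (equivalently, the renewal calculation $\mathbb{E}\big[\sum_{k=1}^{G}k\big]=1/p^2$ per diagonal run, over a $\mathrm{Geometric}(p)$ number of runs, which is the origin of the $1/p^3$ term), justifying that the tail $\pbar^{\,n}D_{k+n}\to 0$ (legitimate because $D_k$ grows only linearly), and then controlling the accumulated constants tightly enough to fit under $\tfrac{3}{2p^3}$. A secondary point to address is that the clean bound is valid precisely on the recurrent class, where $x\le y$: for arbitrary $(x,y)$ with $x\gg y$ the linear-in-$x$ growth of $M$ would overshoot it, so the inequality $x\le y$ is essential to the final step.
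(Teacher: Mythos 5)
Your proposal is correct and takes essentially the same route as the paper: both analyses reduce the first-passage cost of the always-sample chain to the geometric inter-update structure with a recursion at $(1,1)$ (your unrolled sum $\sum_{j\ge 0}\pbar^{\,j}b_{k+j}$ is exactly the paper's expectation over the geometric run length $N$), and both invoke $x\le y$ on feasible states in the final bounding step. The only difference is bookkeeping: you solve the first-passage system exactly and then bound the resulting affine form of $M(x,y)$, whereas the paper upper-bounds the per-slot ages along the run (using $y+j$ in place of the actual ages) before taking expectations and solving for the cost at $(1,1)$.
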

\noindent We now employ Lemma~\ref{lemma:dtmcpolicy} in verifying the conditions of the following theorem. 
\begin{theorem} \thmlabel{opt-policy-exist} 
   \cite[Theorem]{sennott88} If the following conditions hold for MDP $\Mcal$:
    \begin{enumerate}
    \item For every state $s$ and discount factor $\al$, the quantity $\Valp{s}$ is finite,
    \item $\fal{s}\coloneqq \Valp{s} - \Valp{0}$ satisfies $-N \overset{(a)}\leq \fal{s} \overset{(b)}\leq M(s)$, where $M(s) \geq 0$, and
    \item For all $s$ and $a$, $\sum_{s'} \mathbb{P}_{s,s'}(a)M(s') < \infty$,
    \end{enumerate}
    then there exists a stationary policy that is average cost optimal for MDP $\Mcal$. Moreover, for $\Mcal$, there exists a constant $\gs = \lim_{\al \to 1}(1-\al)\Valp{s}$ for every state $s$, and a function $f(s)$ with $-N \leq f(s) \leq M(s)$ that solve relative-cost Bellman's equation, 
    \begin{equation}
        \gs + f(s) = \min_a\{C(s;a) + \sum_{s' \in S} \mathbb{P}_{s,s'}(a) f(s')\}.
    \end{equation}
\end{theorem}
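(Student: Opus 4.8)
The plan is to recognize that \Thmref{opt-policy-exist} is verbatim the result of \cite{sennott88}, so the entire task is to verify its three hypotheses for $\Mcal$, drawing on Lemma~\ref{lemma:dtmcpolicy} together with the monotonicity established in Proposition~\ref{prop:mp-mono}. I would fix $(0,1)$ as the reference state, so that $\Valp{0}$ in condition~2 is read as $\Valp{0,1}$, matching the reference state used in Proposition~\ref{prop:rel-cost-func}, and I would let $M(x,y)$ denote the right-hand side of \eqnref{mxybound}.

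First I would dispose of condition~1. Since every stage cost is non-negative, $\Valp{s}\ge 0$. For an upper bound, observe that under \emph{any} policy the client age obeys $y(t)\le y+t$, because it rises by at most one per slot; hence the per-stage cost is at most $y+t+c$, and $\Valp{x,y}\le\sum_{t=0}^{\infty}\al^{t}(y+t+c)<\infty$ for $\al<1$. Condition~3 is then immediate from \eqnref{mxybound}: the bound $M(x,y)$ is affine in the second coordinate and independent of the first, and for either action the state $(x,y)$ has exactly two successors, whose second coordinates are $x+1$ (when $a=1$) or $y+1$ (when $a=0$) by \eqnref{transproba}. Substituting these finite values into $M(\cdot)$ and summing the two probability-weighted terms yields a finite quantity, so $\sum_{s'}\mathbb{P}_{s,s'}(a)M(s')<\infty$.

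The substantive step is condition~2. The lower bound $\fal{s}\ge-N$ is free: Proposition~\ref{prop:mp-mono} shows $\Valp{x,y}$ is non-decreasing in both arguments, so $(0,1)$ minimizes $\Valp{\cdot}$ and $\fal{s}=\Valp{s}-\Valp{0,1}\ge 0$, whence $N=0$ suffices. For the upper bound $\fal{s}\le M(s)$ I would use a first-passage comparison against the stationary ``read in every slot'' policy $\tilde\pi$ of Lemma~\ref{lemma:dtmcpolicy}. Following $\tilde\pi$ from $s$ until the first hitting time $\tau$ of $(0,1)$ and acting optimally thereafter is one admissible policy, so
\begin{equation}
\Valp{s}\le C^{\mathrm{FP}}_{\al}(s)+\mathbb{E}_{\tilde\pi}[\al^{\tau}]\,\Valp{0,1},
\end{equation}
where $C^{\mathrm{FP}}_{\al}(s)$ is the $\al$-discounted first-passage cost incurred under $\tilde\pi$ before reaching $(0,1)$. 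Since $\Valp{0,1}\ge 0$ and $\mathbb{E}_{\tilde\pi}[\al^{\tau}]\le 1$, rearranging gives $\fal{s}=\Valp{s}-\Valp{0,1}\le C^{\mathrm{FP}}_{\al}(s)$; and because costs are non-negative with $\al\le 1$, the discounted first-passage cost is dominated by its undiscounted counterpart, which is precisely $M(x,y)$ as defined in Lemma~\ref{lemma:dtmcpolicy} and bounded by \eqnref{mxybound}. This closes condition~2(b).

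The main obstacle is making this first-passage comparison rigorous: I must justify decomposing $\Valp{s}$ into the cost accrued before hitting $(0,1)$ plus the discounted continuation value (a strong-Markov/Bellman argument at the stopping time $\tau$), and confirm that the $M(x,y)$ of Lemma~\ref{lemma:dtmcpolicy} really is the undiscounted expected first-passage cost that dominates $C^{\mathrm{FP}}_{\al}(s)$ uniformly in $\al$. Once conditions~1--3 are verified, the conclusion --- existence of an average-cost optimal stationary policy, together with the constant $\gs=\lim_{\al\to1}(1-\al)\Valp{s}$ and a relative-cost function $f$ with $-N\le f(s)\le M(s)$ solving the relative-cost Bellman equation --- follows directly by invoking \cite{sennott88}.
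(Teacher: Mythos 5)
Your verification is correct and follows essentially the paper's own route: the theorem itself is Sennott's result and is only cited, so the ``proof'' consists of checking its three hypotheses for $\Mcal$, and both you and the paper do this with the same three ingredients --- Lemma~\ref{lemma:dtmcpolicy} for the first-passage cost bound under the always-sample policy, Proposition~\ref{prop:mp-mono} (with reference state $(0,1)$, the coordinatewise-minimal state) to get $\fal{s}\ge 0$ so that $N=0$ works in condition 2(a), and the two-successor computation from \eqnref{transproba} together with \eqnref{mxybound} for condition 3. The one difference is packaging: where you re-derive conditions 1 and 2(b) by hand --- the stopping-time decomposition $\Valp{s}\le C^{\mathrm{FP}}_{\al}(s)+\mathbb{E}_{\tilde\pi}[\al^{\tau}]\,\Valp{0,1}$ followed by domination of the discounted first-passage cost by its undiscounted counterpart --- the paper simply invokes \cite[Propositions 4 and 5]{sennott88}, which state precisely that one stationary policy inducing an irreducible, ergodic chain with finite expected first-passage cost to the reference state suffices for those two conditions; your inline argument is the standard proof of those propositions, so nothing is lost, it is just more self-contained. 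One small repair is needed in your condition-1 argument: the claim that $y(t)\le y+t$ under \emph{any} policy fails at states with $x>y$ (sampling there sets the client age to $x+1>y+1$); since you only need an upper bound on an infimum, it is cleaner to bound $\Valp{x,y}$ by the discounted cost of the specific never-sample policy, under which $y(t)=y+t$ holds exactly and $\sum_{t\ge 0}\al^{t}(y+t)<\infty$ for $\al<1$.
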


\begin{figure}[t]
    \centering
    \includegraphics{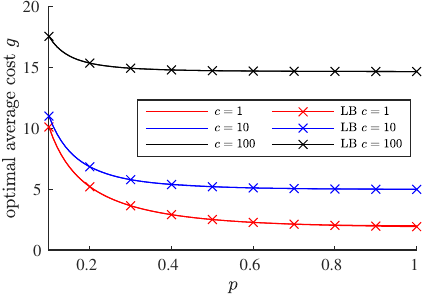}
    \caption{Comparison of optimal average cost $g$ and the corresponding lower bound (LB) as a function of probability $p$ of source update publication in a slot, with a fixed sampling cost $c$.}
    \label{fig:optimalg}
\end{figure}

    For MDP $\Mcal$, we choose reference state $0$ as $(0,1)$.
    A sufficient condition for $1$ and $2(b)$ to hold is the existence of a single stationary policy that induces an irreducible, ergodic Markov Chain, with the associated expected cost of first passage from any state $(x,y)$ to state $(0,1)$ being finite (\cite[Propositions 4 and 5]{sennott88}). Lemma~\ref{lemma:dtmcpolicy} verifies that this sufficient condition is met for our problem. 
    A sufficient condition for $2(a)$ is that $\Valp{s}$ is non-decreasing in $s$ \cite{sennott88}. 
    Proposition~\ref{prop:mp-mono} demonstrates that this sufficient condition is also met.  

Now, condition~$3$ of \Thmref{opt-policy-exist} asserts that under any $a$, the quantity $\sum_{s^\prime} \mathbb{P}_{s,s^\prime}(a)M(s^\prime)$ should be finite. For MDP $\Mcal$, from \eqnref{transproba0}, when $a=0$, we have for any state $s=(x,y)$,
\begin{equation}
    \sum_{s^\prime} \mathbb{P}_{s,s^\prime}(0)M(s^\prime) = p M(0,y+1) + \pbar M(x+1, y+1).
    \eqnlabel{Pssa0}
\end{equation}
From \eqnref{transproba1}, when $a=1$, we similarly have for any state $s=(x,y)$,
\begin{equation}
    \sum_{s^\prime} \mathbb{P}_{s,s^\prime}(1)M(s^\prime) = p M(0,x+1) + \pbar M(x+1, x+1).\eqnlabel{Pssa1}
\end{equation}
It follows from \eqnref{Pssa0}, \eqnref{Pssa1} and Lemma~\ref{lemma:dtmcpolicy} that condition $3$ holds for MDP $\Mcal$.
Therefore, there exists a constant $\gs = \lim_{\al \to 1}(1-\al)\Valp{x,y}$ for every state $(x,y)$ that is an optimal average cost and a relative cost to go function $f(x,y)$ with $0\leq f(x,y) \leq M(x,y)$. 


\section{Conclusion}
This paper focused on a class of systems where source updates are disseminated using shared memory. The Writer process records these source updates in the memory, and a Reader fulfills clients’ requests for these measurements by reading from the memory. 
We addressed optimizing memory access to minimize average cost, establishing the existence of an optimal stationary deterministic policy for our Markov Decision Process (MDP) and showing its threshold structure.

We found that in the optimal policy, if the Reader decides to stay idle when the memory undergoes an update, then the optimal action is to not sample in subsequent slots until the memory is updated with a fresh source update. 
Finally, we observe that the age achieved by this optimal policy serves as a lower bound on the achievable average age when the Reader is unaware of the memory state. 
Investigating the optimal policy structure for this scenario, where the Reader does not know when the memory is updated, is left for future research. 

\bibliographystyle{IEEEtran}
\bibliography{refs,AOI-2020-03}


\clearpage
\newpage

\appendix

\section*{Proof of Lemma~\ref{lemma:dtmcpolicy}}
\noindent
\textbf{Lemma~\ref{lemma:dtmcpolicy}.} Under the deterministic stationary policy of reading in every slot, the system exhibits an irreducible, ergodic Markov Chain, with expected cost $M(x,y)$ of first passage from state $s=(x,y)$ to $(0,1)$ satisfying
\begin{equation}
M(x,y)\le \frac{1+p}{p^2}(c+y) + \frac{3}{2p^3}.
\end{equation}

\begin{proof}
Note that \begin{equation}
\eqnlabel{Mxybound}
M(x,y)\le\E{\Chat(x,y)},
\end{equation} 
where $\Chat(x,y)$ is the first passage cost  under the policy in which the Reader samples in every slot.
Starting from state $(x,y)$ under the ``always sample'' policy, there is a geometric~$(p)$ number $N$ of slots in which the system passes from states $(x,y)$ up through $(x+N-1,y+N-1)$  until a memory update takes the system to state $(0,x+N)$. In the next slot, a cost $c+x+N$ is incurred and the system goes to either state $(0,1)$ with probability $p$ or, with probability $1-p$,  to $(1,1)$. In the latter case, the additional cost $\Chat(1,1)$ is incurred to reach $(0,1)$. We define the Bernoulli~$(1-p)$
 random variable $\overline{Z}$ such that $\overline{Z}=1$ if a memory update does {\em not} occur in state $(0,x+N)$.  The cost expended to go from $(x,y)$ to $(0,1)$ is then
\begin{IEEEeqnarray}{rCl}
    \Chat(x,y) &=& \sum_{j=y}^{\mathclap{y+N-1}}(c+j) + (c+x+N)
    +\overline{Z} \Chat(1,1)\eqnlabel{miieqn-1}\nn
&=& 
\scalebox{0.99}
{$
N(c+y) +(c+x)+ \frac{N(N+1)}{2}+ \overline{Z}\Chat(1,1). $}
\IEEEeqnarraynumspace
\end{IEEEeqnarray}
Taking expectation,
\begin{IEEEeqnarray}{rCl}
\E{\Chat(x,y)}
    &=& \frac{c +y}{p}
+ (c+x)+ \frac{3-p}{2p^2}+ \pbar \E{\Chat(1,1)}.\eqnlabel{miieqn-2}
\IEEEeqnarraynumspace 
\end{IEEEeqnarray}
Evaluating \eqnref{miieqn-2} at $(x,y)=(1,1)$ yields
\begin{equation}
    \E{\Chat(1,1)} = \frac{1}{p}\left[\left(\frac{1}{p} + 1\right)(c+1) + \frac{3-p}{2p^2}\right]. 
    \eqnlabel{m11eqn}
\end{equation}
Combining \eqnref{miieqn-2}
and \eqnref{m11eqn} yields 
\begin{IEEEeqnarray}{rCl}
    \E{\Chat(x,y)}
    &=& \frac{c+y}{p}
+ (c+x)+ \frac{1-p^2}{p^2}(c+1)+\frac{3-p}{2p^3}.
\IEEEeqnarraynumspace
\end{IEEEeqnarray}
Since $x\le y$ and $1\le y$ for any feasible state $(x,y)$, we obtain
\begin{IEEEeqnarray}{rCl}
    \E{\Chat(x,y)}
    &\le& \paren{\frac{1}{p}+1}(c+y)
+ \frac{1-p^2}{p^2}(c+y) +\frac{3-p}{2p^3}\nn
&\le& \frac{1+p}{p^2}(c+y) + \frac{3}{2p^3}.
\end{IEEEeqnarray}
The claim then follows from \eqnref{Mxybound}.
\end{proof}

\section*{Proof of 
Proposition~\ref{prop:mp-thresh}}
\label{proof:mp-thresh}
\noindent
\textbf{Proposition~\ref{prop:mp-thresh}.}
If the $\al$-optimal action is to sample in $(x,y)$, then the $\al$-optimal action is to sample in every $(x,y')$ with $y' \geq y$.
\begin{proof}
For brevity, we'll use the following shorthand notation in the proof.
For $w \leq v$, we define
\begin{equation}\eqnlabel{Jtildefn}
    \Jvt{u,v,w} = \Valp{u,v} - \Valp{u,w}.
\end{equation}
The monotonicity of the value function (Proposition~\ref{prop:mp-mono}) implies 
\begin{align}\eqnlabel{jtilvalmono}
    \Jvt{u,v_1,w} \leq \Jvt{u,v_2,w}\quad\text{for all $u$, $w$, and $v_1\le v_2$.}
\end{align}
For the rest of our discussion, we use the following form of discounted-cost Bellman's optimality equation with $c_\al = c/\alpha$:
\begin{align}
    \Valp{x,y} &= y+ \al\min\{p\Valp{0,y+1} + \pbar\Valp{x+1, y+1}, \nn
    & c_\al + p\Valp{0,x+1}+\pbar\Valp{x+1,x+1}\},
    \eqnlabel{mpvalfunc}
\end{align}
Let $\xhat=x+1$ and $\yhat=y+1$.
According to \eqnref{mpvalfunc}, the condition for the Reader to sample in $(x,y)$ is
\begin{equation}
p\Valp{0,\yhat} + \pbar\Valp{\xhat,\yhat} 
\geq c_\al + p\Valp{0,\xhat} + \pbar\Valp{\xhat,\xhat}. \eqnlabel{thresh-cond-1}
\end{equation}
Using the shorthand $\Jvt{u,v,w}$, 
the inequality \eqnref{thresh-cond-1} becomes
\begin{equation}
    p\Jvt{0,\yhat,\xhat} + \pbar \Jvt{\xhat,\yhat,\xhat} \geq c_\al. \eqnlabel{thresh-cond-2}
\end{equation}
Given that condition~\eqnref{thresh-cond-2} holds, we examine the state $(x,\yhat)$. The value function for this state is
\begin{align}
    \Valp{x,\yhat} = \yhat+ \al\min\{&p\Valp{0,\yhat+1} + \pbar\Valp{\xhat, \yhat+1}, \nn
&\phantom{=} c_\al + p\Valp{0,\xhat}+\pbar\Valp{\xhat,\xhat}\}. \eqnlabel{valfuncxyp1}
\end{align}
The condition for the Reader to sample in $(x,\yhat)$ is
\begin{equation}
    p\Valp{0,\yhat+1} + \pbar\Valp{\xhat, \yhat+1} \geq c_\al + p\Valp{0,\xhat}+\pbar\Valp{\xhat,\xhat},
\end{equation}
or equivalently,
\begin{equation}
    p \Jvt{0,\yhat+1,\xhat} + \pbar \Jvt{\xhat,\yhat+1,\xhat} \geq c_\al. 
    \eqnlabel{xyp1-cond}
\end{equation}
Now we observe from the monotonicity property \eqnref{jtilvalmono} and \eqnref{thresh-cond-2} that
\begin{align}
    p \Jvt{0,\yhat+1,\xhat} + \pbar \Jvt{\xhat,\yhat+1,\xhat} &\geq p\Jvt{0,\yhat,\xhat} + \pbar \Jvt{\xhat,\yhat,\xhat} \nn
    &\geq c_\al.
    \eqnlabel{verify-thresh-cond}
\end{align}
Thus \eqnref{xyp1-cond} holds, confirming  that the Reader samples in state $(x,\yhat)$. 
%
\end{proof}


\section*{Proof of Proposition~\ref{prop:concavity}}
\label{proof:concavity}
\noindent
\textbf{Proposition~\ref{prop:concavity}.}
(Concavity):
$\Valp{x,y}$ is concave in $y$. Specifically, for a fixed $x$, $\Valp{x,y+1} - \Valp{x,y}$ is non-increasing in $y$.
\begin{proof}
We want to show that for a fixed $x$,
$\Valpn{n}{x,i+1} - \Valpn{n}{x,i} \geq \Valpn{n}{x,i+2} - \Valpn{n}{x,i+1}$, for every $i \in \mathbb{N}$.
To achieve this, we focus on demonstrating the inequality:
\begin{equation}
    \Valpn{n}{x,i+2} + \Valpn{n}{x,i} \leq 2 \Valpn{n}{x,i+1} \qquad \forall~n,i. \eqnlabel{concave-cond}
\end{equation}
The base case for $n=1$ is trivially satisfied, as $\Valpn{1}{x,y} = y$. 
Now suppose \eqnref{concave-cond} holds for $n=1, 2 \ldots k$ for every $i$. We will establish the validity of \eqnref{concave-cond} under two scenarios, corresponding to the $\al$-optimal action at stage $k+1$ being either to sample or idle in state $(x,i+1)$. First, let's consider the case where it is optimal to sample in $(x,i+1)$ at stage $k+1$. This implies that the value iteration function in this state satisfies:
\begin{align}
    \Valpn{k+1}{x,i+1} &= i+1+c+\al p \Valpn{k}{0,x+1}\nn
    &\qquad + \al\pbar \Valpn{k}{x+1,x+1}. \eqnlabel{iplus1-sample}
\end{align}
Furthermore, leveraging Proposition \ref{prop:mp-thresh}, we deduce that sampling in $(x,i+1)$ is also the optimal action for state $(x,i+2)$ at stage $k+1$, resulting in:
\begin{align}
  \Valpn{k+1}{x,i+2} &= i+2+c+\al p \Valpn{k}{0,x+1}\nn
  &\qquad + \al\pbar \Valpn{k}{x+1,x+1}.\eqnlabel{iplus2-sample}  
\end{align}
    
Notice that the value iteration function for $(x,i)$ satisifies
\begin{equation}
    \scalebox{0.9}{
    $
    \Valpn{k+1}{x,i} \leq i+c+\al\left(p \Valpn{k}{0,x+1} + \pbar \Valpn{k}{x+1,x+1} \right). 
    $}
    \eqnlabel{i-sample}
\end{equation}
Combining \eqnref{iplus1-sample}, \eqnref{iplus2-sample}, and \eqnref{i-sample}, we establish:
\begin{equation}
    \Valpn{k+1}{x,i+2} + \Valpn{k+1}{x,i} \leq 2\Valpn{k+1}{x,i+1}.
\end{equation}
Let us now consider the situation where the $\al$-optimal action is to stay idle in state $(x,i+1)$ at stage $k+1$. This implies that
\begin{equation}
    \Valpn{k+1}{x,i+1} = i+1+\al\left(p \Valpn{k}{0,i+2} + \pbar \Valpn{k}{x+1, i+2}\right).\eqnlabel{iplus1-idle}
\end{equation}
Leveraging Proposition~\ref{prop:mp-thresh}, we conclude that staying idle in $(x,i+1)$ is also the optimal action for state $(x,i)$ at stage $k+1$, leading to:
\begin{equation}
    \Valpn{k+1}{x,i} = i+\al\left(p \Valpn{k}{0,i+1} + \pbar \Valpn{k}{x+1, i+1}\right).\eqnlabel{i-idle}
\end{equation}
The value iteration function for $(x,i+2)$ satisfies
\begin{equation}
    \Valpn{k+1}{x,i+2} \leq i+2+\al\left(p \Valpn{k}{0,i+3} + \pbar \Valpn{k}{x+1, i+3}\right).\eqnlabel{iplus2-idle}
\end{equation}
Combining \eqnref{i-idle} and \eqnref{iplus2-idle}, we can demonstrate:
\begin{align}
    &\Valpn{k+1}{x,i+2} + \Valpn{k+1}{x,i} \nn
    &\leq 2(i+1) + \al[p \left(\Valpn{k}{0,i+3} + \Valpn{k}{0,i+1}\right) 
    \nn
    &\qquad + \pbar\left(\Valpn{k}{x+1, i+3}+\Valpn{k}{x+1, i+1}\right)], 
    \nn
    &\overset{(a)}\leq 2(i+1) + \al\left[2p\Valpn{k}{0,i+2} + 2\pbar\Valpn{k}{x+1,i+2}\right], \nn
    &= 2\left(i+1 + \al\left[p\Valpn{k}{0,i+2} + \pbar\Valpn{k}{x+1,i+2}\right)\right], \nn
    &\overset{(b)}=2\Valpn{k+1}{x,i+1},
\end{align}
where $(a)$ follows from induction hypothesis that $\Valpn{k}{x,i+3} + \Valpn{k}{x,i+1} \leq 2 \Valpn{k}{x,i+2}$, and $(b)$ follows from \eqnref{iplus1-idle}. 
It follows from principle of mathematical induction that \eqnref{concave-cond} holds for every $n$, and hence $\Valpn{n}{x,y}$ is concave in $y$.
As $\lim_{n\to\infty}\Valpn{n}{x,y} = \Valp{x,y}$, this implies that $\Valp{x,y}$ is concave in y.
\end{proof}

\section*{Proof of Proposition~\ref{prop:policy-hypo}}
\label{proof:policy-hypo}
\noindent
\textbf{Proposition~\ref{prop:policy-hypo}.}
If the $\al$-optimal action in state $(x,y)$ is to idle, then the $\al$-optimal action in states $(x+i,y+i), \forall i\geq 1$ is to stay idle.
\begin{proof}
For brevity, we'll use the following shorthand notation in the proof.
For $w \leq v$, let 
\begin{equation}
    \Jt{n}{u,v,w} = \Valpn{n}{u,v} - \Valpn{n}{u,w}.
\end{equation}
We establish key properties of $\Jt{n}{u,v,w}$ to be utilized later in the proof.
\begin{enumerate}
    \item 
        Given $w\leq v$, Proposition~\ref{prop:mp-mono} implies $\Valpn{n}{u,v} \geq \Valpn{n}{u,w} \geq 0$ and hence $\Jt{n}{u,v,w} \geq 0$.
        
    \item
        If $v_2 \geq v_1$, and $w_2 \geq w_1$, it follows from concavity property (Proposition~\ref{prop:concavity}) that
        \begin{equation}
            \Valpn{n}{u,v_2} - \Valpn{n}{u,w_2} \leq \Valpn{n}{u,v_1} - \Valpn{n}{u,w_1}
        \end{equation}
        and as a consequence,
        \begin{equation}
        \scalebox{0.9}{
        $
        \Jt{n}{u,v_2,w_2} \leq \Jt{n}{u,v_1,w_1}, \forall\text{$u$, $w_1 \le w_2$, and $v_1\le v_2$}.
        $}  
        \eqnlabel{jtilmono}
        \end{equation}

    \item 
        Let $\uhat = u+1$, $\vhat = v+1$ and $\what=w+1$.
        Under the condition of not sampling in $(u,v)$, it can be shown that
        \begin{equation}
        \scalebox{0.9}{
        $
            \Jt{n}{u,v,w} = v-w + \al(p\Jt{n-1}{0,\vhat,\what} + \pbar\Jt{n-1}{\uhat,\vhat,\what}). 
        $}    
        \eqnlabel{jtiliter}
        \end{equation} 
\end{enumerate}

We now resume the proof of proposition. Letting $\xhat = x+1$ and $\yhat = y+1$ and $c_\al = c/\al$, we re-write the value iteration in state $(x,y)$ given by \eqnref{valiteropteqn} as:
\begin{align}
    \Valpn{n+1}{x,y} &= y + \al\min\{ p\Valpn{n}{0,\yhat} + \pbar\Valpn{n}{\xhat, \yhat}, \nn
                     &\phantom{=}\qquad\qquad c_\al + p\Valpn{n}{0,\xhat}+\pbar\Valpn{n}{\xhat,\xhat}\},
    \eqnlabel{valiterxy}
\end{align}
Given that Reader doesn't sample in $(x,y)$ implies that for all $n$, the terms inside the $\min$ function in \eqnref{valiterxy} satisfy:
\begin{equation}
    p\Valpn{n}{0,\yhat} + \pbar\Valpn{n}{\xhat,\yhat} 
    \leq c_\al + p\Valpn{n}{0,\xhat} + \pbar\Valpn{n}{\xhat,\xhat}.
    \eqnlabel{given-cond-1}
\end{equation}
Expressing inequality \eqnref{given-cond-1} in terms of $\Jt{n}{u,v,w}$, we get:
\begin{equation}
    p\Jt{n}{0,\yhat,\xhat} + \pbar \Jt{n}{\xhat,\yhat,\xhat} \leq c_\al.
    \eqnlabel{given-cond-2}
\end{equation}
Given that \eqnref{given-cond-2} holds for every $n$,
we examine
state $(x+i,y+i)$. The value iteration expression at stage $n+1$ is given by:
\begin{align}
    &\hspace{-1.2em}\Valpn{n+1}{x+i,y+i} \nn
    &= y+i + \al\min\{ p\Valpn{n}{0,\yhat+i} + \pbar\Valpn{n}{\xhat+i, \yhat+i}, \nn
    &\phantom{=}\qquad\qquad c_\al + p\Valpn{n}{0,\xhat+i}+\pbar\Valpn{n}{\xhat+i,\xhat+i}\}.
    \eqnlabel{iyivaliter}
\end{align}
To establish that the optimal action in state $(x,y)$ being to stay idle implies the same for states $(x+i,y+i)$, we aim to show that the terms inside the $\min$ function in \eqnref{iyivaliter} satisfy:
\begin{align}
    p\Valpn{n}{0,\yhat+i} + \pbar\Valpn{n}{\xhat+i, \yhat+i} 
    &\leq c_\al + p\Valpn{n}{0,\xhat+i}
    \nn
    &\qquad +\pbar\Valpn{n}{\xhat+i,\xhat+i}.
    \eqnlabel{prov-cond-1}
\end{align}
or equivalently,
\begin{equation}
    p\Jt{n}{0,\yhat+i,\xhat+i} + \pbar \Jt{n}{\xhat+i, \yhat+i, \xhat+i} \leq c_\al.
\eqnlabel{prov-cond-2}
\end{equation}
Given that \eqnref{given-cond-2} holds for all $n$, proving that \eqnref{prov-cond-2} holds for all $n$ is equivalent to showing that the LHS of \eqnref{prov-cond-2} is less than LHS of \eqnref{given-cond-2}. 
For that it is sufficient to show for all $n\geq1$
\begin{equation}
    I_1(n+1) = \Jt{n}{0,\yhat+i,\xhat+i} - \Jt{n}{0,\yhat,\xhat} \leq 0, \eqnlabel{i1eqn}
\end{equation}
and 
\begin{equation}
    I_2(n+1) = \Jt{n}{\xhat+i, \yhat+i, \xhat+i} - \Jt{n}{\xhat,\yhat,\xhat} \leq 0. \eqnlabel{i2eqn}
\end{equation}
With $i \geq 1$, it is clear that $\yhat+i \geq \yhat$ and $\xhat+i \geq \xhat$. Leveraging 
\eqnref{jtilmono}
, we conclude that $\Jt{n}{0,\yhat+i,\xhat+i} \leq \Jt{n}{0,\yhat,\xhat}$, leading to $I_1 \leq 0$ for every $n$.
We use inductive arguments to show that $I_2(n+1) \leq 0$.
When $n=1$, we see that
\begin{equation}
    \Jt{1}{u,v,w} = \Valpn{1}{u,v} - \Valpn{1}{u,w} = v-w.
\end{equation}
This means that 
$$I_2(2) = \Jt{1}{\xhat+i, \yhat+i, \xhat+i} - \Jt{1}{\xhat,\yhat,\xhat} = 0, $$
and hence the base case holds.
Now assume that $I_2(n+1) \leq 0$ for $n=1,\ldots k-1$ for all $i \geq 0$. This implies:
\begin{equation}
    I_2(k) = \Jt{k-1}{\xhat+i, \yhat+i, \xhat+i} - \Jt{k-1}{\xhat,\yhat,\xhat} \leq 0, \quad \forall i \geq 0. \eqnlabel{i2eqnatk}
\end{equation}
We have established that $I_1(k) \leq 0$, implying that \eqnref{i1eqn} holds at $n=k-1$. Combining this with \eqnref{i2eqnatk}, we conclude that both $I_1$ and $I_2$ hold at $n=k-1$. This, in turn, implies that \eqnref{prov-cond-2} holds at $n=k-1$. Consequently, \eqnref{prov-cond-1} holds at $n=k-1$. 
Therefore,
the assumption $I_2(k) \leq 0$ for all $i \geq 0$ implies that the action that minimizes \eqnref{iyivaliter} at stage $k$ is to stay idle in state $(x+i,y+i)$ for all $i \geq 1$.

Now, we need to demonstrate that: 
\begin{equation}
    I_2(k+1) = \Jt{k}{\xhat+i, \yhat+i, \xhat+i} - \Jt{k}{\xhat,\yhat,\xhat} \leq 0, \quad \forall i \geq 0.
\end{equation}
Given that the assumption is to not sample in $(\xhat+i, \yhat+i)$ for $i \geq 0$ at stage $k$, this means that it is optimal to not sample in $(\xhat,\yhat)$ (Proposition~\ref{prop:mp-thresh}). Hence, employing Property $(3)$ of $\Jt{n}{u,v,w}$, from 
\eqnref{jtiliter}, we have with $\ihat=i+1$,
\begin{align}
    &\hspace{-5em}\Jt{k}{\xhat+i, \yhat+i, \xhat+i} \nn
    &= \yhat-\xhat + \al(p\Jt{k-1}{0, \yhat+\ihat, \xhat+\ihat} 
    \nn 
    &\qquad + \pbar\Jt{k-1}{\xhat+\ihat, \yhat+\ihat, \xhat+\ihat}).
\end{align}
Similarly, we have
\begin{align}
    \Jt{k}{\xhat, \yhat, \xhat}
    &= \yhat-\xhat  + \al(p\Jt{k-1}{0, \yhat+1, \xhat+1} 
    \nn
    &\qquad + \pbar\Jt{k-1}{\xhat+1, \yhat+1, \xhat+1}).
\end{align}
From \eqnref{jtilmono}, we can state that:
\begin{equation}
\Jt{k-1}{0, \yhat+\ihat, \xhat+\ihat} \leq \Jt{k-1}{0, \yhat+1, \xhat+1}. \eqnlabel{hypo-tmp1}
\end{equation}
Additionally, it follows from \eqnref{i2eqnatk},
\begin{equation}
\Jt{k-1}{\xhat+\ihat, \yhat+\ihat, \xhat+\ihat} \leq \Jt{k-1}{\xhat+1, \yhat+1, \xhat+1}. \eqnlabel{hypo-tmp2}
\end{equation}
Based on \eqnref{hypo-tmp1} and \eqnref{hypo-tmp2}, we observe that
\begin{equation}
    \Jt{k}{\xhat+i, \yhat+i, \xhat+i} - \Jt{k}{\xhat, \yhat, \xhat} = I_2(k+1) \leq 0.
\end{equation}
Thus, by induction, we establish that $I_2(n+1) \leq 0$ holds for all $n \geq 1$. 
\end{proof}




\section*{Proof of Proposition~\ref{prop:rel-cost-func}}
\label{proof:rel-cost-func}
\noindent
\textbf{Proposition~\ref{prop:rel-cost-func}.}
Defining $(0,1)$ as the reference state with $f(0,1) = 0$, the relative cost functions satisify:
\begin{enumerate}[label={(\roman*)},itemindent=1em]

    
    \item 
        \begin{equation}
            f(0,\yzero+1) - f(0,\yzero) = 1.
        \end{equation}
        \begin{proof} 
        Given that it is optimal to sample in $(0,\yzero)$, the relative-cost Bellman's equation in state $(0,\yzero)$ is given as
        \begin{equation}
            \gs + f(0,\yzero) = \yzero + c + pf(0,1) + \pbar f(1,1). \eqnlabel{be-0y0}
        \end{equation}
        The optimal action in $(0,\yzero+1)$ is also to sample (Proposition~\ref{prop:mp-thresh}), and therefore, the relative-cost Bellman's equation in state $(0,\yzero+1)$ becomes 
        \begin{equation}
            \gs + f(0,\yzero+1) = \yzero + 1 + c + pf(0,1) + \pbar f(1,1). \eqnlabel{be-0y01}
        \end{equation}
        It follows from \eqnref{be-0y0} and \eqnref{be-0y01} that
        $f(0,\yzero+1) - f(0,\yzero) = 1$.
        \end{proof}
    
    \item 
        For any $x \geq 0$,
        \begin{equation}
            f(x, \yzero-1) = \frac{1}{p}(\jzero + \frac{\pbar}{p}) - 1,
        \end{equation}
        where $\jzero = \yzero - g + pf(0,\yzero)$. 
        \begin{proof}
            For any $x\geq 0$, the optimal action in $(x,\yzero-1)$ is to idle, and the Bellman's equation \eqnref{rc-bellmans} becomes
            \begin{equation}
                f(x,\yzero-1) = -\gs + \yzero-1 + p f(0,\yzero) + \pbar f(x+1,\yzero).
            \end{equation}
            Let $\jzero = -\gs + \yzero + p f(0,\yzero)$, we obtain
            \begin{equation}
                f(x,\yzero-1) = \jzero  - 1 + \pbar f(x+1,\yzero).
                \eqnlabel{xy0min1-1}
            \end{equation}
            Since the optimal action in $(x,\yzero-1)$ is to idle, then from Proposition~\ref{prop:policy-hypo}, the optimal action in $x \geq 0$, $(x+1,\yzero), $ is to idle as well. The Bellman's equation \eqnref{rc-bellmans} in $(x+1,\yzero)$ becomes
            \begin{align}
                &f(x+1,\yzero) \nn
                &= -\gs + \yzero + p f(0,\yzero+1) + \pbar f(x+2,\yzero+1), \nn
                &\overset{(a)}=  -\gs + \yzero + p(1+f(0,\yzero)) + \pbar f(x+2,\yzero+1), \nn
                &= \jzero + p + \pbar f(x+2,\yzero+1),
                \eqnlabel{xp1y0}
            \end{align}
            where $(a)$ follows from Proposition~\ref{prop:rel-cost-func}(i). Substituting \eqnref{xp1y0} into \eqnref{xy0min1-1}, we obtain
            \begin{equation}
                f(x,\yzero-1) = \jzero(1+\pbar) - 1 + p \pbar + \pbar^2 f(x+2,\yzero+1).
                \eqnlabel{y0min1-2}
            \end{equation}
            Repeating this procedure $n$ times yields
            \begin{align}
                f(x,\yzero-1) &= \jzero\sum_{i=0}^{n}\pbar^i + p\pbar \sum_{i=1}^{n-1}(i+1)\pbar^i 
                \nn
                &+ \pbar^2 \sum_{i=0}^{n-2}(i+1)\pbar^i 
                \nn
                &+ \pbar^{n+1}f(x+n+1, \yzero+n) - 1,
            \end{align}
            and in the limit $n \to \infty$ we have
            \begin{align}
                f(0,\yzero-1) &= \frac{\jzero}{1-\pbar} + \frac{p \pbar}{(1-\pbar)^2} + \frac{\pbar^2}{(1-\pbar)^2} - 1, \nn
                &= 
                \frac{1}{p}(\jzero + \frac{\pbar}{p}) - 1.
            \end{align}
            Here, $\pbar^{n+1}f(x+n+1, \yzero+n) \to 0$ when $n \to \infty$ as $f(x+n+1, \yzero+n)$ is bounded. This bounding property is derived from \Thmref{opt-policy-exist}, where it is established that $f(x+n+1, \yzero+n) \leq M(x+n+1, \yzero+n)$. 
            Then it follows from \eqnref{mxybound},
            \begin{equation}
                f(x+n+1,\yzero+n) \le\frac{1+p}{p^2}(c+\yzero+n) + \frac{3}{2p^3}.
            \end{equation}


        \end{proof}

    \item 
        For every $y < \yzero$, 
        \begin{equation}
            f(0,y) = f(1,y)= \cdots = f(y,y).
        \end{equation}
        \begin{proof}
            From the threshold structure of the optimal policy, the optimal action in $(x,\yzero-2)$ is to stay idle, and the relative-cost Bellman's equation \eqnref{rc-bellmans} becomes
            \begin{align}
                &f(0,\yzero-2) \nn
                &= -\gs + \yzero-2+p f(0,\yzero-1) + \pbar f(1,\yzero-1), \nn
                &\overset{(a)}= -\gs + \yzero-2+p f(0,\yzero-1) + \pbar f(0,\yzero-1), \nn
                &= -\gs + \yzero-2+ f(0,\yzero-1),
                \eqnlabel{xy0min2-1}
            \end{align}
            where $(a)$ follows from Proposition~\ref{prop:rel-cost-func}(ii) as $f(x,\yzero-1)$ is independent of $x$.
            This fact along with \eqnref{xy0min2-1} implies that $f(x,\yzero-2)$ is also independent of $x$, and so $f(0,\yzero-2) = f(1,\yzero-2) \ldots f(\yzero-2,\yzero-2)$.
            In fact this can be generalized such that $(x,\yzero-k)$ with $x \geq 0$ and $k \in \{1, 2, \ldots, \yzero-1\}$ is independent of $x$.
        \end{proof}
    
        
    \item  
        When $\yzero > 1$, 
        \begin{equation}
        f(0,\yzero) = \yzero-\gs + c.
        \end{equation}
        \begin{proof}
            At $(0,\yzero)$ the Reader samples and the Bellman's equation \eqnref{rc-bellmans} becomes
            \begin{equation}
                f(0,\yzero) = \yzero - \gs + c + p f(0,1) + \pbar f(1,1).
            \end{equation}
            When $\yzero > 1$, we have from Proposition~\ref{prop:rel-cost-func}\ref{prop:rc-leq-y0}, $f(0,1) = f(1,1)$, and since $f(0,1) = 0$, it follows that
            \begin{equation}
                f(0,\yzero) = \yzero - \gs + c.
            \end{equation}
        \end{proof}

        
\end{enumerate}

\section*{Proof of Lemma~\ref{lemma:gy0exp}}
\label{proof:gy0exp}
\noindent
\textbf{Lemma~\ref{lemma:gy0exp}.}
The average cost as a function of the threshold $\yzero$ is given by:
\begin{equation}
    \gzero{\yzero} = \frac{1}{2}\left(\frac{1}{p} + \yzero + \frac{2cp + \pbar/p}{p\yzero+\pbar}\right).
\end{equation}
\begin{proof}
We break down the proof into three parts. In the first and second parts, we derive analytical expressions for the optimal average cost when $\yzero=1$ and $\yzero=2$, respectively. In the third part, we focus on obtaining a general expression for the optimal average cost when $\yzero>2$. Surprisingly, we discover that the average cost equation as a function of $\yzero$ obtained in the third part is a general equation for any $\yzero \geq 1$.

\medskip
\noindent
(\textit{Part $1$}):    If $\yzero=1$, it is optimal to sample in $(0,1)$. Thus  Bellman's equation \eqnref{rc-bellmans} yields
\begin{align}
 f(0,1) &= -\gs + 1 + c + pf(0,1) + \pbar f(1,1).
 \end{align}
 Defining $(0,1)$ as the reference state with $f(0,1)=0$ yields
   $0 = -\gs +1 +c+\pbar f(1,1)$, or equivalently,
\begin{align}
        f(1,1)&= \frac{\gs - 1 - c}{\pbar}.
        \eqnlabel{eqn11}
    \end{align}
    Now it is optimal to never sample in $f(1,1)$. Then
    \begin{align}
        f(1,1) &= -\gs + 1 + p f(0,2) + \pbar f(2,2), \nn
        &\overset{(a)}= \gs + 1 + p (1+f(0,1)) + \pbar f(2,2),\nn
        &= -\gs + 1 + p + \pbar f(2,2),
        \eqnlabel{eqn11-1}
    \end{align}
    where $(a)$ follows from Proposition \ref{prop:rel-cost-func}\ref{prop:rc-diff-y0y0pl1}.
    Since staying idle is the optimal action in $(1,1)$, then Proposition~\ref{prop:policy-hypo} implies that staying idle is also the optimal action in state $(2,2)$, and hence
    \begin{align}
        f(2,2) &= -\gs + 2 + p f(0,3) + \pbar f(3,3), \nn
        &= -\gs + 2 + p (2 + f(0,1)) + \pbar f(3,3), \nn
        &= -\gs + 2 + 2p + \pbar f (3,3).
        \eqnlabel{eqn22}
    \end{align}
    Substituting $f(2,2)$ obtained in \eqnref{eqn22} in \eqnref{eqn11-1}, we obtain
    \begin{equation}
        f(1,1) = -\gs(1+\pbar) + 1 + p + 2\pbar+2p\pbar+\pbar^2f(3,3). 
    \eqnlabel{eqn11-2}
    \end{equation}
    Similar to above, we can obtain $f(3,3)$ as
    \begin{equation}
        f(3,3) = -\gs + 3 + 3p + \pbar f(4,4).
        \eqnlabel{eqn33}
    \end{equation}
    Again substituting $f(3,3)$ obtained in \eqnref{eqn33} in \eqnref{eqn11-2}, we obtain
    \begin{align}
        f(1,1) &= -\gs(1+\pbar+\pbar^2) + 1+p + 2\pbar + 2p\pbar + \nn
        &\qquad\qquad 3\pbar^2 + 3p\pbar^2 +\pbar^3f(4,4). 
    \end{align}
    Repeating this $n$ times we obtain
    \begin{align}
        f(1,1) &= -\gs(1+\pbar+\pbar^2 \ldots +\pbar^n)  \nn
        &+(1+2\pbar+3\pbar^2+\ldots+(n+1)\pbar^n) \nn
        &+p(1+2\pbar+3\pbar^2+\ldots+(n+1)\pbar^n) \nn
        &+ \pbar^{n+1} f(n+2, n+2),
    \end{align}
    and letting $n\to\infty$, we obtain
    \begin{align}
        f(1,1) &= \frac{-\gs}{1-\pbar} + \frac{1}{(1-\pbar)^2} + \frac{p}{(1-p)^2}, \nn
        &= \frac{-\gs}{p} + \frac{1}{p^2} + \frac{1}{p}, \nn
        &= \frac{-\gs+1}{p} + \frac{1}{p^2}. \eqnlabel{eqn11-n}
    \end{align}
    Here, $\pbar^{n+1} f(n+2, n+2) \to 0$ when $n \to \infty$ as $f(n+2, n+2)$ is bounded. This bounding property is derived from \Thmref{opt-policy-exist}, where it is established that $f(n+2, n+2) \leq M(n+2, n+2)$. Subsequently, \eqnref{mxybound} implies that
    \begin{equation}
        f(n+2, n+2) \leq \frac{1+p}{p^2}(c+n+2) + \frac{3}{2p^3}.
    \end{equation}
    Now equating $f(1,1)$ in  \eqnref{eqn11} and \eqnref{eqn11-n}, we obtain
    \begin{align}
    g &= \frac{1}{p} + cp. \eqnlabel{gaty0eq1}
    \end{align}

\medskip
\noindent
(\textit{Part $2$}): If $\yzero=2$, then it is optimal to sample in $(0,2)$. The realtive-cost Bellman's equation \eqnref{rc-bellmans} is
    \begin{align}
        f(0,2) &= -\gs + 2 + c + p f(0,1) + \pbar f(1,1), \nn
        &\overset{(a)}= -\gs + 2 + c, \eqnlabel{eqn02}
    \end{align}
    where $(a)$ follows from Proposition~\ref{prop:rel-cost-func}\ref{prop:rc-leq-y0} which for $\yzero=2$ implies that $f(1,1) = f(0,1) = 0$.
    Now from \eqnref{rc-y0min1}, we have for $x = 1$, and $\yzero=2$,
    \begin{align}
        f(0,1) &= \frac{1}{p}\left(\jzero+\frac{\pbar}{p}\right) - 1, \nn
        &= \frac{1}{p}\left(-\gs + 2 + p f(0,2) + \frac{\pbar}{p}\right) - 1.
    \end{align}
    With $f(0,2)$ given by \eqnref{eqn02}, we have
    \begin{equation}
        f(0,1)=\frac{1}{p}\left(-\gs + 2 + p(-\gs + 2 + c) + \frac{\pbar}{p}\right) - 1.
    \end{equation} 
    Equating $f(0,1)=0$ gives
    \begin{align}
        \gs &= \frac{1}{1+p}\left(2 + 2p + cp + \frac{\pbar}{p} - p\right),\nn
        &=\frac{1}{1+p}\left( 1+ 1 + p + cp + \frac{\pbar}{p}\right),\nn
        &= \frac{1}{1+p}\left(1 + \frac{1}{p} + p (c+1)\right), \nn
        &= \frac{1}{p} + \frac{(c+1)p}{1+p}. \eqnlabel{gaty0eq2}
    \end{align}

\medskip
\noindent
(\textit{Part $3$}): Now consider the case when $\yzero > 2$. Since it is optimal to not sample in state $(0,\yzero-2)$, the Bellman's equation for state $(0,\yzero-2)$ becomes
    \begin{align}
        f(0,\yzero-2) &= \yzero-2 -\gs  +p f(0,\yzero-1) + \pbar f(1,\yzero-1), \nn
        &\overset{(a)}= \yzero-2 -\gs+ f(0,\yzero-1),
    \end{align}
    where $(a)$ follows from  
    Proposition~\ref{prop:rel-cost-func}\ref{prop:rc-leq-y0}.
    Moreover,
    \begin{align}
        f(0,\yzero-3) &= \yzero-3 -\gs + p f(0,\yzero-2) + \pbar f(0,\yzero-2),
        \nn
        &= \yzero-3 -\gs + f(0,\yzero-2), \nn
        &= 2(\yzero-\gs)- (2+3) + f(0,\yzero-1).
    \end{align}
    Repeating this procedure $k$ times yields,
    \begin{align}
        &f(0,\yzero-k) \nn
        &= (k-1)(\yzero-\gs) - (2+3+4+\dots+k) + f(0,\yzero-1), \nn
        &= (k-1)(\yzero-\gs) - \frac{k(k+1)}{2} +1 + f(0,\yzero-1).
        \eqnlabel{yzero-k}
    \end{align}
    Recalling $(0,1)$ as the reference state with $f(0,1)=0$, evaluating \eqnref{yzero-k} at $k=\yzero-1$ yields
    \begin{equation}\eqnlabel{yzero-ksub}
        (\yzero-2)(\yzero-\gs) = \frac{(\yzero-1)\yzero}{2} +1+f(0,\yzero-1),
    \end{equation}
    From Proposition~\ref{prop:rel-cost-func},
    \begin{IEEEeqnarray}{rCl}
        f(0,\yzero-1)&=&(1+1/p)(\yzero-g)+c+\pbar/p^2-1.\IEEEeqnarraynumspace
    \end{IEEEeqnarray}
    Thus it follows from \eqnref{yzero-ksub} that
    \begin{equation}
        (\yzero-\gs)\left(\frac{1}{p} + \yzero - 1\right) = \frac{(\yzero-1)\yzero}{2} - \frac{\pbar}{p^2} - c.
    \end{equation}
    Rearranging to solve for $\gs$ yields
    \begin{align}
        \gs &= \yzero-\frac{1}{\yzero-1+1/p}\left[\frac{\yzero(\yzero-1)}{2} - c - \frac{\pbar}{p^2}\right]\nn
        &=\frac{\yzero}{2}+\frac{\frac{\yzero(\yzero-1+1/p)}{2}-\left[\frac{\yzero(\yzero-1)}{2} - c - \frac{\pbar}{p^2}\right]}{\yzero-1+1/p}\nn
        &=\frac{\yzero}{2}+\frac{\frac{\yzero}{2p}+ c +\frac{\pbar}{p^2}}{\yzero-1+1/p}.
    \end{align}
    Recalling $-1+1/p= \pbar/p$, we obtain
    \begin{align}
         \gs &= \frac{\yzero}{2} + \frac{1}{2p} + \frac{1}{2}\frac{2cp + \pbar/p}{p\yzero + \pbar}. \eqnlabel{gy0final-1}
    \end{align}
Since \eqnref{gy0final-1} depends upon $\yzero$, we express it as
\begin{equation}
    \gzero{\yzero} = \frac{1}{2}\left(\frac{1}{p} + \yzero + \frac{2cp + \pbar/p}{p\yzero+\pbar}\right). \eqnlabel{gy0final-2}
\end{equation}
Finally observe that even though \eqnref{gy0final-2} was derived for $\yzero > 2$, we see that 
$\gzero{1} = 1/p + cp$, where the RHS is same as RHS of \eqnref{gaty0eq1}, the average cost obtained separately at $\yzero=1$. Similarly, $\gzero{2} = 1/p + (c+1)p/(1+p)$, where the RHS is same as RHS of \eqnref{gaty0eq2}, the average cost obtained separately at $\yzero=2$.

\end{proof}

\end{document}